\newcommand{\id}{{\boldsymbol{\mathbbm{1}}}}
\newcommand{\Chi}{\raisebox{0.5ex}{\mbox{{\Large $\chi$}}}}
\newcommand{\Partial}{\raisebox{0ex}{\mbox{{\large $\partial$}}}}
\newcommand{\ChiCaption}{\mbox{{\Large $\chi$}}}
\newcommand{\PartialCaption}{\mbox{{\large $\partial$}}}
\newtheorem{theorem}{Theorem}[section]
\newtheorem{lemma}[theorem]{Lemma}
\newtheorem{remark}[theorem]{Remark}
\newtheorem{proposition}[theorem]{Proposition}
\newtheorem{definition}[theorem]{Definition}
\newcommand{\R}{\mathbb{R}}
\DeclareMathOperator{\diag}{diag}
\DeclareMathOperator{\Sym}{Sym}
\DeclareMathOperator{\dev}{dev}
\def\barr{\begin{array}}
	\def\tr{\textrm{tr}}
	\def\dd{\displaystyle}
\begin{document}
	\title{Comparison of   isotropic elasto-plastic models  for the  plastic metric tensor\, $C_p=F_p^T\, F_p$}
	\author{
		Patrizio Neff\thanks{Corresponding author: Patrizio Neff,  \ \ Head of Lehrstuhl f\"{u}r Nichtlineare Analysis und Modellierung, Fakult\"{a}t f\"{u}r
			Mathematik, Universit\"{a}t Duisburg-Essen,  Thea-Leymann Str. 9, 45127 Essen, Germany, email: patrizio.neff@uni-due.de}\quad
		and \quad
		Ionel-Dumitrel Ghiba\thanks{Ionel-Dumitrel Ghiba, \ \ \ \ Lehrstuhl f\"{u}r Nichtlineare Analysis und Modellierung, Fakult\"{a}t f\"{u}r Mathematik,
			Universit\"{a}t Duisburg-Essen, Thea-Leymann Str. 9, 45127 Essen, Germany;  Alexandru Ioan Cuza University of Ia\c si, Department of Mathematics,  Blvd.
			Carol I, no. 11, 700506 Ia\c si,
			Romania; and  Octav Mayer Institute of Mathematics of the
			Romanian Academy, Ia\c si Branch,  700505 Ia\c si, email: dumitrel.ghiba@uni-due.de, dumitrel.ghiba@uaic.ro}}
	
	\maketitle
	
	\begin{center}
		\textit{Dedicated to Michael Ortiz on the occasion of his 60th birthday with great admiration}
	\end{center}

	\begin{abstract}
	\noindent We discuss  in detail existing  isotropic elasto-plastic models  based on  6-dimensional flow rules for the positive definite plastic metric tensor $C_p=F_p^T\, F_p$  and highlight their properties and interconnections. We show that seemingly different models are equivalent in the isotropic case.
		\\
		\\
		{\textbf{Key words:}  multiplicative decomposition, elasto-plasticity,  ellipticity domain, plastic metric, isotropic formulation, 6-dimensional flow rule, associated plasticity, subdifferential formulation, convex elastic domain, plastic spin, energetic formulation}
	\end{abstract}

	\section{Introduction}

	Since the early days of the introduction of the multiplicative decomposition into computational elasto-plasticity, the need was felt to reduce the level of complexity and to discard the concept of a plastic rotation in the completely isotropic setting. This means to consider a flow rule not for the {\bf plastic distortion} $F_p$ (9-dimensional) \cite{NeffGhibaPlasticity,steigmann2011mechanically,gupta2011aspects,Reese97a,dettmer2004theoretical,Simo85,ortiz1986analysis,cuitino1992material}, but to consider directly a flow rule for the {\bf plastic metric tensor}  $C_p=F_p^TF_p\in{\rm PSym}(3)$ (6-dimensional) \cite{shutov2013analysis,GrandiStefanelli,shutov2008finite,brepols2014numerical,vladimirov2008modelling}, which is then automatically invariant under left-multiplication of $F_p$ with a plastic rotation. The plastic distortion is in general incompatible $F_p\neq \nabla \psi_p$, as is the plastic metric \break $C_p\neq \nabla\psi_p^T\nabla\psi_p$. A formulation in the plastic metric $C_p$ is particular attractive because it circumvents problems associated with the intermediate configuration introduced by the multiplicative decomposition, which is trivially non-unique since
	\begin{align*}
		F=F_e\cdot F_p=F_e\cdot Q^T\cdot Q\cdot F^p=F_e^*\cdot F_p^*, \quad Q\in{\rm SO}(3).
	\end{align*}
	Several proposals with the aim of removing the non-uniqueness of the intermediate configuration have been given in the literature. Our comparative study is related to the following models:
	Simo's model \cite{simo1993recent} (Reese and Wriggers \cite{Reese97a}, Miehe \cite{Miehe92}); Miehe's model \cite{Miehe95}; Lion's model \cite{lion1997physically} (Helm \cite{helm2001formgedachtnislegierungen}), Dettmer-Reese \cite{dettmer2004theoretical});
	Simo and Hughes' model   \cite{Simo98b};
	Helm's model \cite{helm2001formgedachtnislegierungen} (Vladimirov, Pietryga and Reese \cite{vladimirov2008modelling}, Shutov and  Krei{\ss}ig \cite{shutov2008finite},  Reese and Christ  \cite{reese2008finite}, Brepols, Vladimirov and  Reese \cite{brepols2014numerical}, Shutov and Ihlemann \cite{shutov2013analysis});
	Grandi and Stefanelli's model \cite{GrandiStefanelli} (Frigeri and Stefanelli \cite{frigeri2012existence}).
	All these models are given with respect to different configurations, either the reference configuration, the intermediate configuration or the current configuration. In order to be able to compare them, it is necessary to transform all to the same configuration for that purpose. In our case we choose the reference configuration. Moreover, any explicit dependence on $F_p$ instead of $C_p$ in the model formulation must be able to be subsumed  into a dependence on $C_p$ alone in the isotropic case. A major body of our work consists in showing this for the models under consideration.

	The paper  is structured as follows. After a paragraph giving some definitions which generalize the concepts from small strain-additive plasticity to finite strain plasticity we established some auxiliary results. Then we discuss  existing 6-dimensional flow rules from the literature. The main properties of the investigated isotropic plasticity models are summarized in Figure \ref{plastmodeldiagram} and Figure \ref{plastmodeldiagram2}. Finally, in the appendix, we obtain explicit formulas for some of the isotropic plasticity models.

	\subsection{Consistent isotropic finite plasticity model for the plastic metric tensor $C_p$}
	
	In this paper, we use the standard Euclidean scalar product on $\R^{3\times 3}$  given by
	$\langle {X},{Y}\rangle:=\tr{(X Y^T)}$, and thus the Frobenius tensor norm is
	$\|{X}\|^2=\langle {X},{X}\rangle$. The identity tensor on $\R^{3\times 3}$ will be denoted by $\id$, so that
	$\tr{(X)}=\langle {X},{\id}\rangle$. We let $\Sym(3)$ and $\rm PSym(3)$ denote the symmetric and positive definite symmetric tensors respectively. We adopt
	the usual abbreviations of Lie-group theory. {Here and i}n the following the superscript
	$^T$ is used to denote transposition,  ${\rm sym}\, X=\frac{1}{2}(X+X^T)$ denotes the symmetric part of the matrix $X\in \R^{3\times 3}$, while $\dev_3 X=X-\frac{1}{3}\, \tr(X)\cdot \id$ represents the deviatoric (trace free) part  of the matrix $X$.

	The classical concept of {\bf associated perfect plasticity} is uniquely defined in the case of small strain-additive  plasticity. In this case the  total symmetric  strain is decomposed additively into elastic and plastic parts $\varepsilon=\varepsilon_e+\varepsilon_p$ and the rate-independent evolution law for the symmetric plastic strain $\varepsilon_p$ is given in subdifferential format
	\begin{align*}
		\frac{\rm d}{\rm dt}[\varepsilon_p]\in  \Partial {\Chi}(\Sigma_{\rm lin}),\qquad \tr(\varepsilon_p)=0,
	\end{align*}
	where  $\Partial \Chi$ is the subdifferential of the indicator function $\Chi$ of the convex elastic domain $$\mathcal{E}_{\rm e}({\Sigma_{\rm lin}},\frac{2}{3}\, {\boldsymbol{\sigma}}_{\!\mathbf{y}}^2)=\left\{{\Sigma_{\rm lin}}\in{\rm Sym}(3) \big|\,\ \|\dev_3
	{\Sigma_{\rm lin}}\|^2\leq\frac{2}{3}\, {\boldsymbol{\sigma}}_{\!\mathbf{y}}^2\right\}\subset  {\rm Sym}(3)$$  and  $\Sigma_{\rm lin}:=-D_{\varepsilon_p}[W_{\rm lin}(\varepsilon-\varepsilon_p)]$ is the thermodynamic driving stress  of the plastic process. Here, $\Sigma_{\rm lin}$ is clearly symmetric.
	
	In such a way, the principle of {\bf maximum dissipation} (equivalent to the convexity of the elastic domain and normality of the flow direction) is satisfied. The structure of associated flow rules in geometrically nonlinear theories is by far not as trivial as in the geometrically linear models. However, in this work we use:
	\begin{definition}{\rm (geometrically nonlinear associated plastic flow)} \label{definitionpld}
		We call a plastic flow rule for some plastic variable $P$ (whether symmetric or not) associated, whenever the flow rule can be written as
		\begin{equation*}
			\frac{\rm d}{\rm dt}[P]\, P^{-1}\in  \Partial {\Chi}(\Sigma)\qquad \text{or}\qquad \sqrt{P}\frac{\rm d}{\rm dt}[P^{-1}]\, \sqrt{P}\in  f=\Partial {\Chi}(\Sigma) ,
		\end{equation*} where  $\Sigma$ is some symmetric or non-symmetric stress tensor.
		Here, $\frac{\rm d}{\rm dt}[P^{-1}]\, P$ is the correct format for the time derivative (it will lead to an exponential update, see the implicit method based on the exponential mapping considered in \cite{shutov2013explicit}). Moreover, we require that $\Chi$ is the indicator function of some {\bf convex} domain in the $\Sigma$-stress space.
	\end{definition}
	
	After liniarization (small strain-additive approximation) this condition is equivalent to classical associated plasticity.
	Further, let us also remark that a metric is by definition symmetric and positive definite, i.e. $C_p\in{\rm PSym}(3)$.
	\begin{definition}{\rm (consistent isotropic finite plasticity model for plastic metric tensor $C_p$)}\label{consistentdef}
		We say that an associated plastic flow  rule, in the sense of Definition \ref{definitionpld},  for the plastic metric  tensor $C_p$ is consistent, whenever:
		\begin{itemize}
			\item[i)] it is thermodynamically  correct, i.e. the reduced dissipation inequality is satisfied;
			\item[ii)] plastic incompressibility: the constraint $\det C_p(t)=1$ for all $t\geq 0$ follows from the flow rule;
			\item[iii)]  $\,C_p(t)\in{\rm PSym}(3)$ for all $t>0$ if $C_p(0)\in{\rm PSym}(3)$.
		\end{itemize}
	\end{definition}
	As we will see from the next Lemma \ref{lemmaPSym}, our requirement iii) follows if $C_p(t)\in{\rm Sym}(3)$ for all $t\geq 0$, $C_p(0)\in{\rm PSym}(3)$ and if ii) is satisfied.
	\newpage
	We finish our setup of preliminaries  with the following definitions:
	
	\begin{definition}{\rm (reduced dissipation inequality-thermodynamic consistency)} For a given energy $W$, we say that the reduced dissipation inequality along the plastic evolution is satisfied if and only if
		\begin{align*}
			\frac{\rm d}{\rm dt}[W(F\,F_p^{-1}(t)]=\frac{\rm d}{\rm dt}[\widetilde{W}(C\,C_p^{-1}(t)]=\frac{\rm d}{\rm dt}[\Psi(C,C_p(t)]\leq 0
		\end{align*}
		for all constant in time $F$ (viz. $C=F^TF$), depending in which format the elastic energy is given.
	\end{definition}
	
	\begin{definition} {\rm (Loss of ellipticity in the elastic domain)} We say that the elasto-plastic formulation preserves ellipticity in the elastic domain whenever
		the purely elastic response in elastic unloading of the material remains rank-one convex for arbitrary large given plastic pre-distortion, see \cite{NeffGhibaAdd,GhibaNeffMartin}.
	\end{definition}

	\subsection{Auxiliary results}

	We consider   the multiplicative decomposition of the deformation gradient \cite{kroner1955fundamentale,kroner1958kontinuumstheorie,kroner1959allgemeine,lee1969elastic,neff2009notes,Neff_Knees06}
	and we define, accordingly, the elastic and plastic  strain tensors
	\begin{align}
		C_e&:=F_e^TF_e\in{\rm PSym}(3),\qquad B_e:=F_e\,F_e^T\in{\rm PSym}(3),\notag\\ C_p&:=F_p^TF_p\in{\rm PSym}(3).\notag
	\end{align}
	Let us also define the  stress tensors
	\begin{align}
		\Sigma:&=2\,C\, D_C[\widehat{W}(C)]=2\, D_{\log C}[\overline{W}(\log C)]=D_{\log U}[\check{W}(\log U)]\notag\\
		&=U\, D_U[W(U)]=F^TD_F[W(F)]\,,\notag\\
		\tau:&=2\, D_B[\widehat{W}(B)]\, B=2\, D_{\log B}[\overline{W}(\log B)]=D_{\log V}[\check{W}(\log V)]\notag\\
		&=V\, D_V[W(V)]=2\, F\,D_C[\widehat{W}(C)]\,F^T.\notag
	\end{align}
	The tensor $\Sigma=C\cdot S_2(C)$, where  $S_2=2\,D_C[W(C)]$ is
	the second  Piola-Kirchhoff stress tensor, is sometimes called the \textbf{Mandel stress tensor}  and it holds $\dev_3 \Sigma_{e}=\dev_3 \Sigma_{\rm E}$, where  $\Sigma_{\rm E}$ is the elastic
	\textbf{Eshelby tensor}
	$$
	\Sigma_{\rm E}:=F_e^TD_{F_e}[W({F_e})]-W(F_e)\cdot \id=D_{\log C_e}[\overline{W}(\log C_e)]-\overline{W}(\log C_e)\cdot \id,
	$$
	driving the plastic evolution (see e.g. \cite{neff2009notes,maugin1994eshelby,cleja2000eshelby,cleja2003consequences,cleja2013orientational}), while $\tau$ is the {\bf Kirchhoff stress tensor} and $\Sigma_{e}$ is defined in Remark \ref{plastrem1}.

	\begin{remark}\label{plastrem1}
		We also need to consider the following elasto-plastic stress tensors:
		\begin{align}
			\Sigma_{e}:&=2\,C_e\, D_{C_e}[\widehat{W}({C_e})]=2\, D_{\log {C_e}}[\overline{W}(\log {C_e})]=D_{\log U_e}[\check{W}(\log U_e)]\notag\\
			&=U_e\,
			D_{U_e}[W(U_e)]=F_e^TD_{F_e}[W(F_e)]\,,\notag\\
			\tau_e:&=2\, D_{B_e}[\widehat{W}({B_e})]\, {B_e}=2\, D_{\log {B_e}}[\overline{W}(\log {B_e})]=D_{\log V_e}[\check{W}(\log V_e)]\notag\\=
			&V_e\, D_{V_e}[W(V_e)]=2\,
			{F_e}\,D_{C_e}[\widehat{W}({C_e})]\,{F_e}^T.\notag
		\end{align}
		
		The following  relation holds true:
		\begin{align}\label{7.8}
			\Sigma=F^T\tau \, F^{-T}, \qquad \Sigma_{e}=F_e^T\tau_e \, F_e^{-T}.
		\end{align}
		Note that \eqref{7.8}   is not at variance with symmetry of  $\Sigma$  and $\Sigma_e$ in case of isotropy.
	\end{remark}
	Using the fact that for given $F_e\in {\rm GL}^+(3)$ it holds $ \|F_e^TSF_e^{-T}\|^2\geq \frac{1}{2}\|S\|^2$ for all  $S\in {\rm Sym}(3)$, the constant being independent of $F_e$ \cite{NeffGhibaPlasticity},  we obtain the estimate
	\begin{equation*}
		\|\dev_3 \Sigma_{e}\|=\|F_e^T(\dev_3\tau_e)F_e^{-T}\|\geq \frac{1}{\sqrt{2}}\|\dev_3\tau_e\|,
	\end{equation*}
	which    is valid for general anisotropic materials. Indeed, since
	\begin{align*}
		\dev_3 \Sigma_{e}=\dev_3(F_e^T \tau_eF_e^{-T})&=F_e^T \tau_eF_e^{-T}-\frac{1}{3}{\rm tr}(F_e^T \tau_eF_e^{-T})\cdot \id\notag\\
		&=
		F_e^T (\tau_e-\frac{1}{3}{\rm tr} (\tau_e))\cdot \id)F_e^{-T},
	\end{align*}
	we  have
	\begin{align*}
		\dev_3 \Sigma_{e}=F_e^T (\dev_3\tau_e)F_e^{-T}, \qquad \dev_3 \tau_e=F_e^{-T} (\dev_3\Sigma_{e})F_e^{T}, \qquad {\rm tr}(\Sigma_{e})={\rm tr}(\tau_e).
	\end{align*}
	However, $\|\dev_3 \Sigma_{e}\|\neq \|\dev_3 \tau_{e}\|$ for general anisotropic materials.  Let us  remark that for elastically isotropic materials we
	have from the representation formula for isotropic tensor functions
	\begin{align}\label{isoplastalpha}
		D_{C_e}[\widehat{W}({C_e})]&=\alpha_1\, \id +\alpha_2\, C_e+\alpha_3 \, C_e^2\in{\rm Sym}(3),\notag\\
		\Sigma_{e}=2\,C_e\cdot D_{C_e}[\widehat{W}({C_e})]&=2\, C_e\, (\alpha_1\, \id +\alpha_2\, C_e+\alpha_3 \, C_e^2)\in{\rm Sym}(3),
	\end{align}
	where
	\begin{align*}
		\alpha_1&=\frac{2}{I_3^{1/2}({C_e})}\left[I_2({C_e})\,\frac{\partial W}{\partial I_2({C_e})}+I_3({C_e})\frac{\partial W}{\partial I_3({C_e})}\right], \quad \alpha_2=\frac{2}{I_3^{1/2}({C_e})}\frac{\partial W}{\partial I_1({C_e})},\notag\\
		\alpha_3&=-{2\,I_3^{1/2}({C_e})}\frac{\partial W}{\partial I_2({C_e})}
	\end{align*}
	are scalar functions of the invariants of $C_e$, which are functions of $C\, C_p^{-1}$, see Lemma \ref{lemaplasticn}. This leads us to
	
	\begin{lemma}\label{remarktauesigmae}For the isotropic case $\|\dev_3 \Sigma_e\|=\|\dev_3\tau_e\|$.
	\end{lemma}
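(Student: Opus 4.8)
The plan is to reduce the claim to the invariance of the Frobenius norm under orthogonal conjugation, using the polar decomposition of $F_e$. Write $F_e = R_e\, U_e$ with $R_e \in \SO(3)$ and $U_e \in \PSym(3)$, so that $C_e = F_e^T F_e = U_e^2$, while $B_e = F_e\, F_e^T = R_e\, U_e^2\, R_e^T = R_e\, C_e\, R_e^T$ and $V_e = R_e\, U_e\, R_e^T$. Thus $C_e$ and $B_e$ are related by conjugation with the very rotation $R_e$ appearing in the polar decomposition of $F_e$.

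First I would observe that, in the isotropic case, $\Sigma_e$ and $\tau_e$ are the \emph{same} isotropic tensor function evaluated at $C_e$ and $B_e$ respectively. Indeed, by the representation formula \eqref{isoplastalpha} one has $\Sigma_e = 2\,C_e\,(\alpha_1\id + \alpha_2\, C_e + \alpha_3\, C_e^2)$ with the $\alpha_j$ depending only on the principal invariants of $C_e$, and the identical formula (with the same functions $\alpha_j$ of the invariants) expresses $\tau_e = 2\,B_e\,(\alpha_1\id + \alpha_2\, B_e + \alpha_3\, B_e^2)$. Since the principal invariants are conjugation-invariant and $B_e = R_e C_e R_e^T$, it follows that $\tau_e = R_e\, \Sigma_e\, R_e^T$; in particular $\dev_3\tau_e = R_e\,(\dev_3\Sigma_e)\,R_e^T$, because $\tr(\Sigma_e) = \tr(\tau_e)$ and $R_e\,\id\,R_e^T = \id$.

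The conclusion is then immediate: $\|\dev_3\tau_e\| = \|R_e\,(\dev_3\Sigma_e)\,R_e^T\| = \|\dev_3\Sigma_e\|$. As a cross-check one can instead start from the already-established identity $\dev_3\Sigma_e = F_e^T\,(\dev_3\tau_e)\,F_e^{-T} = U_e\,\big[R_e^T(\dev_3\tau_e)R_e\big]\,U_e^{-1}$ and note that $R_e^T(\dev_3\tau_e)R_e$ commutes with $R_e^T B_e R_e = C_e$, hence with $U_e$, so the conjugation by $U_e$ is trivial and we recover $\dev_3\Sigma_e = R_e^T(\dev_3\tau_e)R_e$ and the same norm identity. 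A purely spectral variant also works: writing $C_e = \sum_i \lambda_i^2\, n_i\otimes n_i$ forces $B_e = \sum_i \lambda_i^2\,(R_e n_i)\otimes(R_e n_i)$ and, by isotropy, $\Sigma_e = \sum_i s_i\, n_i\otimes n_i$, $\tau_e = \sum_i s_i\,(R_e n_i)\otimes(R_e n_i)$ with the \emph{same} eigenvalues $s_i$, so the two deviators share the common norm $\big(\sum_i (s_i - \tfrac13\sum_j s_j)^2\big)^{1/2}$.

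I do not anticipate a genuine obstacle: the only step needing a word of justification is that $\dev_3\tau_e$ is coaxial with $B_e$ (so that the conjugation by $U_e$ above does nothing), which is precisely the content of the isotropic representation formula \eqref{isoplastalpha}, together with the elementary observation that the rotation intertwining $C_e$ and $B_e$ is exactly the $R_e$ from the polar decomposition $F_e = R_e U_e$ — both routine.
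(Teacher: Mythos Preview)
Your argument is correct. The paper, however, takes a shorter route: it starts from the already-established identity $\dev_3\Sigma_e = F_e^T(\dev_3\tau_e)F_e^{-T}$ and computes
\[
\|\dev_3\Sigma_e\|^2=\langle F_e^T(\dev_3\tau_e)F_e^{-T},\,F_e^T(\dev_3\tau_e)F_e^{-T}\rangle=\langle B_e(\dev_3\tau_e),\,(\dev_3\tau_e)B_e^{-1}\rangle=\|\dev_3\tau_e\|^2,
\]
invoking only the commutation $\tau_e B_e=B_e\tau_e$ (isotropy) in the last step. No polar decomposition is introduced, and nothing beyond equality of norms is claimed.

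Your main route is genuinely different: you first establish the stronger fact $\tau_e=R_e\,\Sigma_e\,R_e^T$ via the isotropic representation formula and the observation $B_e=R_e C_e R_e^T$, and then read off the norm equality from orthogonal invariance. This buys you more than the lemma asks for (orthogonal similarity of the two stresses, not just equal deviatoric norms), at the cost of one extra ingredient (the polar decomposition). Your ``cross-check'' paragraph is in fact precisely the paper's argument, rewritten with $F_e=R_eU_e$ made explicit: the step ``$R_e^T(\dev_3\tau_e)R_e$ commutes with $C_e$, hence with $U_e$'' is exactly the commutation $\tau_e B_e=B_e\tau_e$ conjugated by $R_e$. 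So you have effectively presented both proofs.
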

	\begin{proof}  For the isotropic case we have $\tau_e\, B_e=B_e \, \tau_e$, which implies
		\begin{align*}
			\|\dev_3 \Sigma_e\|^2&
			=\langle F_e^T\, (\dev_3\tau_e) \, F_e^{-T},F_e^T\, (\dev_3\tau_e) \, F_e^{-T}\rangle=\langle B_e\, (\dev_
			3\tau_e) , (\dev_3\tau_e) \, B_e^{-1}\rangle\notag\\
			&=\|\dev_3\tau_e\|^2,
		\end{align*}
		and the proof is complete.\qedhere
	\end{proof}
	We also  consider the following tensor
	\begin{align}\label{definitiesigma}
		\widetilde{\Sigma}:=2\,C\, D_C[\widetilde{W}(C\, C_p^{-1})]=2\,C\, D[\widetilde{W}(C\, C_p^{-1})]\, C_p^{-1}\not\in {\rm Sym}(3),\end{align}
	which is not symmetric, in general. For instance, for  the simplest Neo-Hooke energy $W(F_e)=\tr(C_e)=\tr(C\, C_p^{-1})$ we have   $D \widetilde{W}(C\, C_p^{-1})=\id$ and $\widetilde{\Sigma}=2\,C\, C_p^{-1}\not\in {\rm Sym}(3)$.

	\begin{lemma}\label{lemaplasticn}
		Any isotropic and objective free energy  $W$ defined in terms of $F_e$ can be expressed as
		\begin{align}\label{fecp}
			W(F_e)=\widetilde{W}(C\,C_p^{-1})=\widetilde{W}(F^T F(F_p^TF_p)^{-1}).
		\end{align}
	\end{lemma}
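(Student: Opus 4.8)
The plan is to peel off the dependence of $W$ on $F_e$ in two standard steps — frame-indifference, then isotropy — until $W$ is a function of the principal invariants of $C_e=F_e^TF_e$ alone, and then to use the multiplicative decomposition $F=F_e\,F_p$ to recognize those invariants as the invariants of $C\,C_p^{-1}$.

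First I would use objectivity: writing the polar decomposition $F_e=R_e\sqrt{C_e}$ with $R_e\in\SO(3)$ and invoking $W(Q\,F_e)=W(F_e)$ for all $Q\in\SO(3)$ with $Q=R_e^T$ gives $W(F_e)=W(\sqrt{C_e})=:\widehat W(C_e)$, so $W$ sees $F_e$ only through $C_e\in\PSym(3)$. Next, elastic isotropy $W(F_eQ)=W(F_e)$ translates into $\widehat W(Q^TC_eQ)=\widehat W(C_e)$ for all $Q\in\SO(3)$; since in dimension three conjugation by $\SO(3)$ already realizes every permutation of the eigenvalues of a symmetric tensor (a transposition of two coordinate axes composed with a sign flip of the third is a rotation), $\widehat W$ is a symmetric function of the eigenvalues of $C_e$, hence a function of its principal invariants, $\widehat W(C_e)=g\!\left(I_1(C_e),I_2(C_e),I_3(C_e)\right)$ for some scalar $g$.

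The last step is the only genuinely computational one: from $F_e=F\,F_p^{-1}$ we get $C_e=F_p^{-T}C\,F_p^{-1}$, and since $C\,C_p^{-1}=C\,F_p^{-1}F_p^{-T}$ one checks directly that $F_p^{-T}\!\left(C\,C_p^{-1}\right)F_p^{T}=F_p^{-T}C\,F_p^{-1}=C_e$, i.e.\ $C_e$ and $C\,C_p^{-1}$ are similar. Similar matrices share their characteristic polynomial, so $I_k(C_e)=I_k\!\left(C\,C_p^{-1}\right)$ for $k=1,2,3$, and substituting into $g$ yields $W(F_e)=g\!\left(I_1(C\,C_p^{-1}),I_2(C\,C_p^{-1}),I_3(C\,C_p^{-1})\right)=:\widetilde W(C\,C_p^{-1})=\widetilde W\!\left(F^TF(F_p^TF_p)^{-1}\right)$, which is the claimed identity. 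I expect the point needing care to be not any single equation but the bookkeeping: one must check that $\widetilde W$ is a legitimate function of its matrix argument — note that $C\,C_p^{-1}$, being the product of two positive definite symmetric tensors, is diagonalizable with positive real eigenvalues, so its invariants are well defined and real — and one must be sure the reduction to invariants genuinely invokes isotropy and not merely objectivity, since without isotropy one could only reduce to $\widehat W(C_e)$ with $C_e=F_p^{-T}CF_p^{-1}$, which still depends on $F_p$ itself and not on $C_p$ alone.
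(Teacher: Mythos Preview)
Your proof is correct and follows the same overall line as the paper: reduce $W(F_e)$ to a function of the principal invariants of $C_e$ via objectivity and isotropy, then show those invariants coincide with those of $C\,C_p^{-1}$. The only difference is in that last step: the paper verifies $I_k(C_e)=I_k(C\,C_p^{-1})$ for $k=1,2,3$ by three separate direct computations using the trace and cofactor formulas, whereas you observe once that $C_e=F_p^{-T}(C\,C_p^{-1})F_p^{T}$ exhibits $C_e$ and $C\,C_p^{-1}$ as similar matrices, so they share the same characteristic polynomial and hence all invariants at one stroke --- a cleaner shortcut that the paper in fact records only \emph{after} the proof, as a remark.
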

	\begin{proof}
		It is clear that any objective elastic  energy $W(F_e)$ which is isotropic w.r.t. $F_e$, can be expressed in terms of the invariants of $C_e$, i.e.
		\begin{align*}
			W(F_e)&=\Psi(I_1(C_e),I_2(C_e),I_3(C_e)),\\
			I_1(C_e)&=\tr(C_e)=\tr(B_e),\quad
			I_2(C_e)=\tr({\rm Cof}\, C_e)=\tr({\rm Cof}\, B_e),\notag\\
			I_3(C_e)&=\det C_e=\det B_e.\notag
		\end{align*}
		
		Now every invariant can be rewritten as follows
		\begin{align}\label{7.5}
			I_1(C_e)&=\langle C_e,\id\rangle=\langle F_e^TF_e, \id\rangle
			=\langle  F_p^{-T}F^T\,(F\, F_p^{-1}), \id\rangle=\langle  C, C_p^{-1}\rangle
			=\tr(C\,C_p^{-1})\notag\\
			&=I_1(C\,C_p^{-1}),\\
			I_2(C_e)&=\langle {\rm Cof}\, C_e,\id\rangle=\det C_e\,\langle  C_e^{-T},\id\rangle=
			\det (F_p^{-T}C\, F_p^{-1})\,\langle  [F_p^{-T}C\, F_p^{-1}]^{-T},\id\rangle\notag\\
			&
			=\det C\det C_p^{-1}\,\langle  C^{-T}, F_p^{T}\, F_p\rangle\notag=
			\det (C\, C_p^{-1})\,\langle  C^{-T}C_p^T, \id\rangle\notag
			=
			\tr({\rm Cof}(C\,C_p^{-1}))\notag\\
			&=I_2(C\,C_p^{-1}),\notag\\
			I_3(C_e)&=\det\, C_e=
			\det (F_p^{-T}C\, F_p^{-1})
			=\det\, C\,\det\, C_p^{-1}=I_3(C\,C_p^{-1}).\notag
		\end{align}
		Therefore, we obtain
		\begin{align*}
			W(F_e)&=\Psi(I_1(C_e),I_2(C_e),I_3(C_e))\notag\\&=\Psi(I_1(C\,C_p^{-1}),I_2(C\,C_p^{-1}),I_3(C\,C_p^{-1}))=\widetilde{W}(C\,C_p^{-1}),
		\end{align*}
		and the proof is complete.
	\end{proof}

	\begin{remark}
		Since the principal invariants $I_k, k=1,2,3$ are the coefficients of the characteristic polynomial and $I_1(C_e)=I_1(C\,C_p^{-1})$, $I_2(C_e)=I_2(C\,C_p^{-1})$, $I_3(C_e)=I_3(C\,C_p^{-1})$, the eigenvalues of $C_e$ and $C\,C_p^{-1}$ coincide. Clearly,
		$C_e\in {\rm PSym}(3)$, however $C\,C_p^{-1}\not\in{\rm Sym}(3)$ in general, unless $C$ and $C_p^{-1}$ commute.
	\end{remark}

	\begin{lemma}
		\label{lemmasigmatild}
		The introduced stress tensors $\Sigma_e, \widetilde{\Sigma}, \tau_e$ are related as follows
		\begin{align*}
			\Sigma_e=F_p^{-T} \widetilde{\Sigma}\,F_p^T, \qquad\quad
			\widetilde{\Sigma}=F^{T} \tau_e\,F^{-T}.
		\end{align*}
	\end{lemma}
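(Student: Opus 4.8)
The plan is to extract both relations from a single differentiated form of the energy identity in Lemma \ref{lemaplasticn}. I set $F_e=FF_p^{-1}$, so that $C_e=F_e^TF_e=F_p^{-T}CF_p^{-1}$, and I keep $F_p$ (hence $C_p=F_p^TF_p$) fixed while regarding $C=F^TF$ as the variable. By Lemma \ref{lemaplasticn} the scalar functions $C\mapsto\widehat{W}(F_p^{-T}CF_p^{-1})$ and $C\mapsto\widetilde{W}(CC_p^{-1})$ coincide; differentiating both in $C$, using the chain rule and the Frobenius adjunction $\langle X,AHB\rangle=\langle A^TXB^T,H\rangle$ together with the symmetry of $C_p^{-1}$, I obtain
\begin{align*}
F_p^{-1}\,D_{C_e}[\widehat{W}(C_e)]\,F_p^{-T}=D[\widetilde{W}(CC_p^{-1})]\,C_p^{-1},
\end{align*}
where on the left $D_{C_e}[\widehat{W}(C_e)]$ is the gradient of $\widehat{W}$ evaluated at $C_e=F_p^{-T}CF_p^{-1}$ — exactly the tensor occurring in the definitions of $\Sigma_e$ and $\tau_e$.

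Multiplying this identity on the left by $2C$, the right-hand side becomes $2C\,D[\widetilde{W}(CC_p^{-1})]\,C_p^{-1}=\widetilde{\Sigma}$ by the definition \eqref{definitiesigma}. On the left I substitute $D_{C_e}[\widehat{W}(C_e)]=\frac{1}{2}\,F_e^{-1}\tau_e F_e^{-T}$ (the definition of $\tau_e$ in Remark \ref{plastrem1} rearranged), collapse $F_p^{-1}F_e^{-1}=(F_eF_p)^{-1}=F^{-1}$ and $F_e^{-T}F_p^{-T}=F^{-T}$, and use $CF^{-1}=F^T$; this gives $F^T\tau_e F^{-T}$, hence $\widetilde{\Sigma}=F^T\tau_e F^{-T}$. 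For the remaining relation I write $F^T\tau_e F^{-T}=F_p^T(F_e^T\tau_e F_e^{-T})F_p^{-T}$ via $F=F_eF_p$ and apply \eqref{7.8} to replace $F_e^T\tau_e F_e^{-T}$ by $\Sigma_e$; this yields $\widetilde{\Sigma}=F_p^T\Sigma_e F_p^{-T}$, i.e.\ $\Sigma_e=F_p^{-T}\widetilde{\Sigma}F_p^T$. (Equivalently, both relations follow directly from the displayed chain-rule identity after suitable left and right multiplications, using $C_p^{-1}F_p^T=F_p^{-1}$.)

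The one delicate point is the chain-rule step: one differentiates a scalar function of $C$ assembled by post-composing $\widehat{W}$, respectively $\widetilde{W}$, with the two distinct linear substitutions $C\mapsto F_p^{-T}CF_p^{-1}$ and $C\mapsto CC_p^{-1}$, and must transport the factors $F_p^{-1},F_p^{-T},C_p^{-1}$ correctly through the Frobenius inner product. Here it matters that $C$ and $F_p$ need not commute and that $CC_p^{-1}$ is not symmetric in general, so the transposes have to be tracked faithfully; everything afterwards is routine algebra with $F=F_eF_p$, $C=F^TF$, $C_p=F_p^TF_p$ and the stress definitions of Remark \ref{plastrem1}.
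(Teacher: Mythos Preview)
Your proof is correct and follows the same overall strategy as the paper---differentiate the energy identity of Lemma~\ref{lemaplasticn} by the chain rule and then unpack the stress definitions together with \eqref{7.8}. The paper, however, differentiates with respect to $F$ (tested against arbitrary $H\in\mathbb{R}^{3\times 3}$), obtains $D_{F_e}[W(F_e)]\,F_p^{-T}=2F\,\mathrm{sym}\bigl[D\widetilde{W}(CC_p^{-1})\,C_p^{-1}\bigr]$, left-multiplies by $F_e^T$ to read off $\Sigma_e=F_p^{-T}\widetilde{\Sigma}\,F_p^T$ first, and only then cites Remark~\ref{plastrem1} for the $\tau_e$ relation; you instead differentiate in $C$, land on $D_{C_e}\widehat{W}$, and derive the $\tau_e$ relation first.

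One point you flag as ``delicate'' but do not quite close: since you differentiate over $\mathrm{Sym}(3)$, your chain-rule step a~priori yields only
\[
F_p^{-1}\,D_{C_e}[\widehat{W}(C_e)]\,F_p^{-T}\;=\;\mathrm{sym}\bigl(D[\widetilde{W}(CC_p^{-1})]\,C_p^{-1}\bigr).
\]
The left-hand side is manifestly symmetric, but to drop the $\mathrm{sym}$ on the right you should add one sentence: either observe that $F_p^{-T}CF_p^{-1}$ and $CC_p^{-1}=F_p^T(F_p^{-T}CF_p^{-1})F_p^{-T}$ are conjugate for \emph{every} $C\in\mathbb{R}^{3\times 3}$, so the scalar identity from Lemma~\ref{lemaplasticn} extends off $\mathrm{Sym}(3)$ and arbitrary increments may be used; or invoke the isotropic representation \eqref{isoplastalpha} to see directly that $D\widetilde{W}(CC_p^{-1})\,C_p^{-1}$ is a sum of terms of the form $C_p^{-1}(CC_p^{-1})^k\in\mathrm{Sym}(3)$. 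The paper's choice to differentiate in $F$ sidesteps this issue entirely, which is the main practical advantage of its route; your route is otherwise slightly shorter.
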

	\begin{proof}
		For arbitrary increment  $H\in \R^{3\times 3}$, we compute
		\begin{align*}
			\langle D_{F}[W({F_e})],H\rangle&=\langle D_F[W(F\,F_p^{-1})],H\rangle=\langle D_{F_e}[W({F_e})],H\,F_p^{-1}\rangle\notag\\&=\langle D_{F_e}[W({F_e})]\,F_p^{-T},H\rangle.
		\end{align*}
		On the other hand, we deduce
		\begin{align*}
			\langle D_F[\widetilde{W}(C\,C_p^{-1})],H\rangle&=\langle D_F[\widetilde{W}(F^TF\,C_p^{-1})],H\rangle
			\\
			&=\langle D[\widetilde{W}(C\,C_p^{-1})],F^THC_p^{-1}+H^TF\,C_p^{-1}\rangle\notag\\&=2\,\langle F\,{\rm sym}[D[\widetilde{W}(C\,C_p^{-1})]C_p^{-1}],H\rangle,\notag
		\end{align*}
		for all $H\in \R^{3\times 3}$.
		In view of Lemma \ref{lemaplasticn} we have $W(F_e)=\widetilde{W}(C\,C_p^{-1})$. Therefore, we obtain
		\begin{align*}
			2\,F\,{\rm sym}[D[\widetilde{W}(C\,C_p^{-1})]C_p^{-1}]=D_{F_e}[W({F_e})]\,F_p^{-T},
		\end{align*}
		and further
		\begin{align*}
			F_e^TD_{F_e}[W({F_e})]\,F_p^{-T}&=2\,F_e^TF\,{\rm sym}[D[\widetilde{W}(C\,C_p^{-1})]C_p^{-1}]=
			2\,F_p^{-T}C\,{\rm sym}[D[\widetilde{W}(C\,C_p^{-1})]C_p^{-1}].
		\end{align*}
		The above relation implies
		\begin{align*}
			\Sigma_e=F_e^TD_{F_e}[W({F_e})]&=
			2\,F_p^{-T}C\,D_C[\widetilde{W}(C\,C_p^{-1})]\,F_p^{T}=\,F_p^{-T}
			\widetilde{\Sigma}\,F_p^{T}.\notag
		\end{align*}
		Therefore, using Remark \ref{plastrem1} the proof is complete.
	\end{proof}
	Next, we introduce a helpful lemma.
	\begin{lemma}\label{lemmaPSym}
		If $t\mapsto C_p(t)\in \R^{3\times 3}$ is continuous and satisfies:
		\begin{equation*}
			\left.
			\begin{array}{rll}
				\det C_p(t)\!\!&=1 \quad\text{ for all}\quad t>0,\vspace{1mm}\\
				C_p(0)\!\!&\in{\rm PSym}(3),\vspace{1mm}\\
				C_p(t)\!\!&\in{\rm Sym}(3)\quad\text{ for all}\quad t>0
			\end{array}
			\right\}\quad \Rightarrow \quad C_p(t)\in{\rm PSym}(3) \quad\text{ for all}\quad t>0.
		\end{equation*}
	\end{lemma}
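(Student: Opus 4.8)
The plan is to run a connectedness argument on the time interval $(0,\infty)$, exploiting that ${\rm PSym}(3)$ is an \emph{open} subset of ${\rm Sym}(3)$ while the constraint $\det C_p(t)=1$ prevents the continuous path $C_p(\cdot)$ from ever touching the boundary of ${\rm PSym}(3)$ inside ${\rm Sym}(3)$, namely the positive \emph{semi}definite matrices possessing a zero eigenvalue.

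First I would record that $C_p(t)\in{\rm Sym}(3)$ for \emph{all} $t\ge 0$, since $C_p(0)\in{\rm PSym}(3)\subset{\rm Sym}(3)$ by hypothesis, so $t\mapsto C_p(t)$ is a continuous curve in ${\rm Sym}(3)$. Then set
$$
A:=\{\, t>0 \ :\ C_p(t)\in{\rm PSym}(3)\,\},
$$
and show $A=(0,\infty)$ by proving $A$ is nonempty, open in $(0,\infty)$, and closed in $(0,\infty)$. Nonemptiness: because ${\rm PSym}(3)$ is open in ${\rm Sym}(3)$ and $C_p(0)\in{\rm PSym}(3)$, continuity at $t=0$ yields $\delta>0$ with $C_p(t)\in{\rm PSym}(3)$ for $0\le t<\delta$, hence $(0,\delta)\subset A$. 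Openness of $A$ in $(0,\infty)$ is immediate from the same two facts (continuity of $C_p(\cdot)$ and openness of ${\rm PSym}(3)$ in ${\rm Sym}(3)$).

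For closedness of $A$ in $(0,\infty)$, let $t_n\in A$ with $t_n\to t>0$. Then $C_p(t)=\lim_n C_p(t_n)$ is a limit of positive definite symmetric matrices, hence positive semidefinite, i.e.\ all its eigenvalues are $\ge 0$; but $\det C_p(t)=1\neq 0$ rules out a zero eigenvalue, so in fact $C_p(t)\in{\rm PSym}(3)$, i.e.\ $t\in A$. Since $(0,\infty)$ is connected and $A\subset(0,\infty)$ is nonempty, open, and closed, we get $A=(0,\infty)$, which is exactly the assertion. (Equivalently, one may track the smallest eigenvalue $\lambda_{\min}(C_p(t))$, which is continuous in $t\ge 0$ by Weyl's inequality, is positive at $t=0$, and is never zero for $t>0$ because $\det C_p(t)=1$; the intermediate value theorem then forces it to remain positive.)

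There is no genuine obstacle here: the single point requiring care is the boundary case in the closedness step — a priori a limit of positive definite matrices need only be positive semidefinite — and it is precisely plastic incompressibility $\det C_p(t)=1$ that excludes this degenerate possibility, which is why hypothesis ii) of Definition \ref{consistentdef} is invoked in the reduction noted after that definition.
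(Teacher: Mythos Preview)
Your proof is correct and essentially coincides with the paper's argument: the paper invokes continuity of the eigenvalues $t\mapsto\lambda_i(t)$ (via symmetry and Cardano's formula), notes $\lambda_i(0)>0$ and $\lambda_1\lambda_2\lambda_3\equiv 1$, and concludes $\lambda_i(t)>0$ for all $t>0$ --- precisely the alternative you sketch parenthetically via $\lambda_{\min}$. Your main open--closed--connected formulation is just a topological repackaging of the same idea, with the closedness step (limit is positive semidefinite, $\det=1$ excludes a zero eigenvalue) being exactly the crux the paper uses.
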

	
	\begin{proof} Using  Cardano's formula and due to the symmetry of $C_p$, the continuity of the map $t\mapsto C_p(t)$ implies the continuity of
		mappings  $t\mapsto \lambda_i(t)$, $i=1,2,3$, where $\lambda_i(t)\in \R$ are the eigenvalues of $C_p(t)$. Since $\lambda_i(0)>0$ and
		$\lambda_1(t)\lambda_2(t)\lambda_3(t)=1$ for all $t>0$, it follows that $\lambda_i(t)>0$ for all $t>0$ and the proof is complete.
	\end{proof}

	We can slightly weaken the assumption in the previous lemma: $\det C_p(t)>0$ for all $t>0$ is sufficient.
	
	\section{The Simo-Miehe 1992 spatial model}\label{simosection}
	\setcounter{equation}{0}
	
	In the remainder of this paper we  discuss different proposal from the literature for plasticity models in $C_p$. Simo \cite{simo1993recent} (see also Reese and Wriggers \cite{Reese97a} and Miehe \cite[page 72, Prop. 5.25]{Miehe92}) considered the spatial flow rule  in the form
	\begin{align}\label{simoflow1}
		-\frac{1}{2}\,\mathcal{L}_v(B_e)=\lambda^+_{\rm p}\, \Partial _{\tau_e} \Phi(\tau_e)\cdot B_e,
	\end{align}
	where  the Lie-derivative $\mathcal{L}_v(B_e)$ is given by
	$
	\mathcal{L}_v(B_e):=F\, \frac{\rm d}{\rm dt}[C_p^{-1}]\, F^T\in {\rm Sym}(3)
	$,  the tensor $
	\tau_e=2\, \Partial _{B_e} W(B_e)\cdot B_e
	$  is the  symmetric Kirchhoff stress tensor, the yield function
	$
	\Phi(\tau_e)=\|\dev_3\tau_e\|-\sqrt{\frac{2}{3}}\sigma_{\textbf{y}}
	$
	and the plastic multiplier $\lambda^+_{\rm p}$ satisfies the  Karush-Kuhn-Tucker (KKT)-optimality constraints
	\begin{align}\label{subgama}
		\lambda^+_{\rm p}\geq 0, \qquad \Phi(\tau_e)\leq0, \qquad \lambda^+_{\rm p}\, \Phi(\tau_e)=0.
	\end{align}
	The flow rule \eqref{simoflow1} is equivalent with
	\begin{align}\label{frsm2}
		\frac{\rm d}{\rm dt}[C_p^{-1}]&=-2\,\lambda^+_{\rm p} \, F^{-1}[\Partial _{\tau_e} \Phi(\tau_e)\cdot B_e]\, F^{-T}\notag\\&=-2\,\lambda^+_{\rm p} \, F^{-1}\left[\frac{\dev_3\tau_e}{\|\dev_3\tau_e\|}\cdot B_e\right]\, F^{-T},
	\end{align}
	which, in view of the properties \eqref{subgama} of $\lambda^+_{\rm p}$, can be written with a subdifferential
	\begin{align}\label{frsm3}
		\frac{\rm d}{\rm dt}[C_p^{-1}]\in -2 \, F^{-1}\left[\Partial_{\tau_e} \Chi(\dev_3\tau_e)\cdot B_e\right]\, F^{-T},
	\end{align}
	where $\Chi$ is the indicator function of the elastic domain
	\begin{align*}\mathcal{E}_{\rm e}(\tau_{_{\rm e}},\frac{2}{3}\, {\boldsymbol{\sigma}}_{\!\mathbf{y}}^2)=\left\{ \tau_e\in {\rm Sym}(3) \big|\,\ \|\dev_3
		\tau_e\|^2\leq\frac{2}{3}\, {\boldsymbol{\sigma}}_{\!\mathbf{y}}^2\right\}=\{\tau_e\in {\rm Sym}(3)\, |\, \Phi(\tau_e)\leq0\}.
	\end{align*}
	The subdifferential $\Partial \Chi (\dev_3\tau_{e})$ of the indicator function $\Chi$ is the normal cone
	\begin{align*}
		\mathcal{N}(\mathcal{E}_{\rm e}({\tau_{e}},\frac{2}{3}\,{\boldsymbol{\sigma}}_{\!\mathbf{y}}^2);\dev_3 {\tau}_e)=\left\{\begin{array}{ll}
			0, & {\tau}_e\in {\rm int}(\mathcal{E}_{\rm e}({\tau_{e}},\frac{2}{3}{\boldsymbol{\sigma}}_{\!\mathbf{y}}^2))\vspace{2mm}\\
			\{\lambda^+_{\rm p}\, \frac{\dev_3\tau_{e}}{\|\dev_3\tau_{e}\|}\,|\, \lambda^+_{\rm p}\in \R_+\},& {\tau}_e\not\in {\rm int}(\mathcal{E}_{\rm e}({\tau_{e}},\frac{2}{3}{\boldsymbol{\sigma}}_{\!\mathbf{y}}^2)).
		\end{array}\right.
	\end{align*}
	
	We deduce (see the model Eq. (5.25) from \cite{Miehe92}) an equivalent definition for $\mathcal{L}_v(B_e)$ given by
	\begin{align*}
		-\frac{1}{2}\,\mathcal{L}_v(B_e)=\lambda^+_{\rm p}\, \frac{\dev_3\tau_e}{\|\dev_3\tau_e\|}\cdot B_e.
	\end{align*}
	Since $C_p=F^T\, B_e^{-1}\, F$
	we have
	$
	\mathcal{L}_v(B_e)=F\, \frac{\rm d}{\rm dt}[C_p^{-1}]\, F^T=F\left(\, \frac{\rm d}{\rm dt}[F^{-1}B_eF^{-T}]\right)\, F^T.
	$ On the other hand, from \eqref{frsm2} it follows
	that
	\begin{align}\label{frsm4}
		\frac{\rm d}{\rm dt}[C_p^{-1}]\, C_p&=-2\,\lambda^+_{\rm p} \, F^{-1}\left[\frac{\dev_3\tau_e}{\|\dev_3\tau_e\|}\cdot B_e\right]\, F^{-T} \,F^T\, B_e^{-1}\, F\notag\\&\in-2\, F^{-1}\,\Partial_{\tau_e}\Chi(\dev_3\tau_e) \, F.
	\end{align}
	Since
	\begin{align}\label{derivaredet}
		\frac{\rm d}{{\rm d t}}[\det {C}_p^{\,-1}]&=\langle {\rm Cof}\,{C}_p^{\,-1}, \frac{\rm d}{{\rm d t}}[ {C}_p^{\,-1}]\rangle=\det {C}_p^{\,-1}\langle {C}_p,
		\frac{\rm d}{{\rm d t}}[ {C}_p^{\,-1}]\rangle\notag\\&
		=\det {C}_P^{\,-1}\langle \id, \frac{\rm d}{{\rm d t}}[ {C}_p^{\,-1}] {C}_p\rangle,
	\end{align}
	from the flow rule \eqref{frsm2} together with $\det C_p(0)=1$ and $\tr(F^{-1}\dev_3\tau_e\, F)=0$ it follows at once that
	$
	\det C_p(t)=1  \text{ for all }  t\geq 0.
	$
	
	\medskip
	
	The next step is to prove that the flow rule   \eqref{simoflow1} implies $\frac{\rm d}{{\rm d t}}[W(F_e)]\leq 0$ at fixed $F$, i.e. the reduced
	dissipation inequality is satisfied. We compute for fixed in time $F$
	\begin{align}\label{derWp}
		\frac{\rm d}{{\rm d t}}[W(F F^{-1}_p)]&=\langle D_{F_e} W(F_e),F\frac{\rm d}{{\rm d t}}[F^{-1}_p]\rangle=\langle D_{F_e} W(F_e),FF^{-1}_pF_p\frac{\rm
			d}{{\rm d t}}[F^{-1}_p]\rangle
		\\
		&=\langle F_e^T D_{F_e} W(F_e),F_p\frac{\rm d}{{\rm d t}}[F^{-1}_p]\rangle=\langle \Sigma_{e},F_p\frac{\rm d}{{\rm d t}}[F^{-1}_p]\rangle\notag\\&=-\langle \Sigma_{e},\underbrace{{\rm sym}(\frac{\rm d}{{\rm d
					t}}[F_p]F_p^{-1})}_{D_p}\rangle,\notag
	\end{align}
	since $\Sigma_{e}\in{\rm Sym}(3)$. We also have
	\begin{align*}
		\frac{\rm d}{{\rm d t}}[C_p]&=\frac{\rm d}{{\rm d t}}[F_p^TF_p]=\frac{\rm d}{{\rm d t}}[F_p^T]F_p+F_p^T\frac{\rm d}{{\rm d
				t}}[F_p]\notag\\&=F_p^T\left(F_p^{-T}\frac{\rm d}{{\rm d t}}[F_p^T]\right)F_p+
		F_p^T\left(\frac{\rm d}{{\rm d t}}[F_p]F_p^{-1}\right)F_p=2\,F_p^TD_pF_p,\notag
	\end{align*}
	where
	$
	D_p:={\rm sym} \left(\frac{\rm d}{{\rm d t}}[F_p]F_p^{-1}\right).
	$
	Hence, we easily deduce the representation
	$
	D_p=\frac{1}{2}\,F_p^{-T}\frac{\rm d}{{\rm d t}}[C_p]F_p^{-1}.
	$
	Therefore, with \eqref{derWp} we obtain
	\begin{align}\label{derivarethermo0}
		\frac{\rm d}{{\rm d t}}[W(F F^{-1}_p)]&=-\langle \Sigma_{e},\frac{1}{2}\,F_p^{-T}\frac{\rm d}{{\rm d t}}[C_p]F_p^{-1}\rangle.
	\end{align}
	Moreover, since $\Sigma_{e}=F_e^T\tau_e \, F_e^{-T}$, we deduce
	\begin{align}\label{derivarethermo01}
		\frac{\rm d}{{\rm d t}}[W(F F^{-1}_p)]&=-\frac{1}{2}\langle F_e^T\tau_e \, F_e^{-T},\,F_p^{-T}\frac{\rm d}{{\rm d t}}[C_p]F_p^{-1}\rangle=\frac{1}{2}\langle F_e^T\tau_e \, F_e^{-T},\,F_p\frac{\rm d}{{\rm d t}}[C_p^{-1}]F_p^{T}\rangle\notag\\&=\frac{1}{2}\langle \,F_p^T\,F_e^T\tau_e \, F_e^{-T}F_p,\frac{\rm d}{{\rm d t}}[C_p^{-1}]\rangle=\frac{1}{2}\langle \,F^T\tau_e \, B_e^{-1}\,F,\frac{\rm d}{{\rm d t}}[C_p^{-1}]\rangle\\&=\frac{1}{2}\langle \tau_e,F\,\frac{\rm d}{{\rm d t}}[C_p^{-1}]\, F^T\, B_e^{-1}\rangle.\notag
	\end{align}
	The flow rule \eqref{frsm2} implies
	\begin{align}\label{derivarethermo02}
		\frac{\rm d}{{\rm d t}}[W(F F^{-1}_p)]&=-\lambda^+_{\rm p} \,\langle \tau_e,\frac{\dev_3\tau_e}{\|\dev_3\tau_e\|}\rangle=-\lambda^+_{\rm p} \, \|\dev_3\tau_e\|\leq 0.
	\end{align}

	In view of the definition of $\Sigma_{e}=F_e^T\, \tau_e\, F_e^{-T}$ we have $F^{-1}[\,{\tau_e} \, B_e ] F^{-T}=F_p^{-1}[{\Sigma_{e} } ] F_p^{-T}$. For the isotropic case it holds $\tau_e\, B_e=B_e \, \tau_e$. Hence,
	\begin{align*}F^{-1}[{\tau_e} \, B_e ] F^{-T}&=F^{-1}[ \, B_e {\tau_e}] F^{-T}=F_p^{-1}F_e^{-1}[ \, F_e\, F_e^T \tau_e] F_e^{-T}F_p^{-T}\\
		&=F_p^{-1}[ \, F_e^T \tau_e F_e^{-T}]F_p^{-T}=F_p^{-1}[ \Sigma_{e}]F_p^{-T}. \notag
	\end{align*}
	We also observe   $F^{-1}[\,{\rm tr}({\tau_e}) \, B_e ] F^{-T}=F_p^{-1}[{\rm tr}({\Sigma_{e} }) ] F_p^{-T}$.
	Thus, we obtain
	$$
	F^{-1}[{\dev_3 \tau_e} \, B_e ] F^{-T}=F_p^{-1}[{\dev_3 \Sigma_{e} } ] F_p^{-T}.
	$$
	Together with  Remark \ref{remarktauesigmae} this implies that \begin{align}\label{difsimomiehe}
		F^{-1}\left[\frac{\dev_3 \tau_e}{\|\dev_3 \tau_e\|}  \, B_e \right] F^{-T}=F_p^{-1}\left[\frac{\dev_3 \Sigma_{e} }{\|\dev_3 \Sigma_{e} \|} \right ] F_p^{-T}.
	\end{align}
	Therefore, in the isotropic case, the flow rule \eqref{frsm3}  has a subdifferential structure: \begin{align}\label{frsm300}
		\frac{\rm d}{\rm dt}[C_p^{-1}]\in -2 \, F_p^{-1}\,[\Partial_{\Sigma_e} \Chi(\dev_3\Sigma_e)]\, F_p^{-T},
	\end{align}
	where $\Chi$ is the indicator function of the elastic domain
	$$\mathcal{E}_{\rm e}(\Sigma_{_{\rm e}},\frac{2}{3}\, {\boldsymbol{\sigma}}_{\!\mathbf{y}}^2)=\left\{ \Sigma_e\in {\rm Sym}(3) \big|\,\ \|\dev_3
	\Sigma_e\|^2\leq\frac{2}{3}\, {\boldsymbol{\sigma}}_{\!\mathbf{y}}^2\right\}.
	$$

	In view of the above equivalent representations of the flow rule, we may summarize the properties  of the Simo-Miehe 1992 model:
	\begin{itemize}
		\item[i)] from \eqref{frsm2} it follows, in the isotropic case (in which $\tau_e$ and $B_e$ commute), that $C_p(t)\in{\rm Sym}(3)$;
		\item[ii)] plastic incompressibility: from  \eqref{frsm4} and \eqref{derivaredet} it follows that
		$\det C_p(t)=1$, since the right hand side is trace-free;
		\item[iii)] for the isotropic case, the right hand-side of \eqref{frsm2} is a function of $C_p^{-1}$ and $C$ alone, since $  B_e=F\,C_p^{-1}\, F^{T}$  and $F^{-1} B_e F=F^{-1}F\,C_p^{-1}\, F^{T} F=C_p^{-1}\, C$;
		\item[iv)] from i) and ii) together and using Lemma \ref{lemmaPSym} it follows that $C_p(t)\in{\rm PSym}(3)$;
		\item[v)] it is thermodynamically correct;
		\item[vi)] the right hand side in the representation  \eqref{frsm2} is not the subdifferential of the indicator function of some convex domain in some stress space. However, this model is an associated plasticity model in the isotropic case, see Proposition  \ref{eqlionsimo} and Proposition \ref{propositionegLion}.
	\end{itemize}

	\section{The Miehe 1995 referential model}\label{Miehe1995}\setcounter{equation}{0}

	Shutov \cite{Shutovpers} interpreted that Miehe in \cite{Miehe95} considered the flow rule\footnote{Miehe \cite{Miehe95} only defines the elastic domain $
		\mathcal{E}_{\rm e}(\widetilde{\Sigma},{\frac{2}{3}}\, \sigma_{\textbf{y}}^2):=\left\{\widetilde{\Sigma}\in \R^{3\times 3}\, \Big| \, \tr( (\dev_3 \widetilde{\Sigma})^2)\leq {\frac{2}{3}}\, \sigma_{\textbf{y}}^2\right\}
		$ in terms of $\tau_e$, i.e. $\mathcal{E}_{\rm e}(\tau_{_{\rm e}},\frac{2}{3}\, {\boldsymbol{\sigma}}_{\!\mathbf{y}}^2)=\left\{ \tau\in {\rm Sym}(3) \big|\,\ \|\dev_3
		\tau\|^2\leq\frac{2}{3}\, {\boldsymbol{\sigma}}_{\!\mathbf{y}}^2\right\}
		$. He uses the same notation for the referential quantities. Therefore, we have two interpretations at hand $
		\Phi(\widetilde{\Sigma})=\|\dev_3\tau_e\|-{\frac{2}{3}}\, \sigma_{\textbf{y}}^2=\sqrt{\tr( (\dev_3 \widetilde{\Sigma})^2)}-{\frac{2}{3}}\, \sigma_{\textbf{y}}^2$. On the other hand, in the isotropic case, we have also $\Phi(\widetilde{\Sigma})=\|\dev_3\Sigma_e\|-{\frac{2}{3}}\, \sigma_{\textbf{y}}^2.
		$
	}
	\begin{align}\label{frSMi}
		\frac{\rm d}{\rm dt} [C_p^{-1}]\, C_p=-\dd\lambda_{\rm p}^+D_ {\widetilde{\Sigma}} \Phi(\widetilde{\Sigma}),
	\end{align}
	where $\widetilde{\Sigma}=2\,C\, D_C[\widetilde{W}(C\, C_p^{-1})]$  and
	$$
	\Phi(\widetilde{\Sigma})=\|\dev_3\tau_e\|-\sqrt{\frac{2}{3}}\, \sigma_{\textbf{y}}=\sqrt{\tr( (\dev_3 \widetilde{\Sigma})^2)}-\sqrt{\frac{2}{3}}\, \sigma_{\textbf{y}}.
	$$
	
	In this model, it is important tp note that it is not the Frobenius norm of $\dev_3 \widetilde{\Sigma}$ which is used in the yield function $\Phi$.  Instead, in the denominator $\mathcal{F}:=\sqrt{\tr( (\dev_3 \widetilde{\Sigma})^2)}$  is  considered, see Eq. (52) from \cite{shutov2008finite}. Since $\dev_3 \widetilde{\Sigma}\not\in {\rm Sym}(3)$, it follows that $\mathcal{F}:=\sqrt{\tr( (\dev_3 \widetilde{\Sigma})^2)}\neq \|\dev_3 \widetilde{\Sigma}\|$\,.  Indeed, we have
	\begin{align*}
		\sqrt{\tr[(\dev_3 \widetilde{\Sigma})^2]}&=\| \dev_3( \widetilde{\Sigma})\|\quad \Leftrightarrow\quad \langle \dev_3\widetilde{\Sigma},(\dev_3\widetilde{\Sigma})^T\rangle =\langle \dev_3\widetilde{\Sigma},\dev_3\widetilde{\Sigma}\rangle\notag
		\\
		& \Leftrightarrow\quad
		\langle \dev_3\widetilde{\Sigma},{\rm skew}(\dev_3 \widetilde{\Sigma})\rangle=0\quad \Leftrightarrow\quad \dev_3\widetilde{\Sigma}\in {\rm Sym}(3)\notag\\& \Leftrightarrow\quad\widetilde{\Sigma}\in {\rm Sym}(3).
	\end{align*}
	For  the simplest Neo-Hooke elastic energy considered in Appendix A.2, $W(F_e)=\tr(C_e)=\widetilde{W}(C\, C_p^{-1})=\frac{1}{2}\,\tr(C\, C_p^{-1})$, we have $\widetilde{\Sigma}=C\, C_p^{-1},
	$ which is not symmetric.
	Hence $\sqrt{\tr[(\dev_3 \widetilde{\Sigma})^2]}\neq \| \dev_3 \widetilde{\Sigma}\|$.
	Let us again remark that $\widetilde{\Sigma}$ is not necessarily symmetric for general $C_p$. However, using  Lemma \ref{lemmasigmatild}, we deduce
	\begin{align}\label{simetrictild0}
		\widetilde{\Sigma}\,C_p&=F_p^{T} \Sigma_e\,F_p^{-T}\,C_p= F_p^{T} \Sigma_e\,F_p\in {\rm Sym}(3)\quad \Rightarrow \quad \dev_3 \widetilde{\Sigma}\cdot C_p\in {\rm Sym}(3),\\
		C_p^{-1}\, \widetilde{\Sigma}&=C_p^{-1}\,F_p^{T} \Sigma_e\,F_p^{-T}= F_p^{-1} \Sigma_e\,F_p^{-T}\in {\rm Sym}(3)\quad \Rightarrow \quad C_p^{-1}\, \dev_3 \widetilde{\Sigma}\in {\rm Sym}(3).\notag
	\end{align}
	In the following, we discuss first the sign  of the quantity\footnote{If we are not looking for the sign of $\tr( (\dev_3 \widetilde{\Sigma})^2)$ for all $\dev_3 \widetilde{\Sigma}\in \R^{3\times 3}$, then considering two particular values of $\dev_3 \widetilde{\Sigma}$, e.g.
		\begin{align*}
			\dev_3 \widetilde{\Sigma}=\left(
			\begin{array}{ccc}
				-\frac{1}{2} & 1 & 2 \\
				-2 & -\frac{1}{2} & 3 \\
				-1 & -3 & -\frac{1}{2} \\
			\end{array}
			\right)\qquad \text{and} \qquad \dev_3 \widetilde{\Sigma}=\left(
			\begin{array}{ccc}
				-\frac{1}{3} & 0 & 0 \\
				0 & \frac{2}{3} & 0 \\
				0 & 0 & -\frac{1}{3} \\
			\end{array}
			\right),
		\end{align*}
		we obtain  $\tr[(\dev_3 \widetilde{\Sigma})^2]=-2$ and $\tr[(\dev_3 \widetilde{\Sigma})^2]=\frac{2}{3}$, respectively.  Hence, $\tr[(\dev_3 \widetilde{\Sigma})^2]$ is not positive for all $\widetilde{\Sigma}\in \R^{3\times 3}$. }
	$\mathcal{F}^2:={\tr( (\dev_3 \widetilde{\Sigma})^2)}$. First, we deduce
	\begin{align}\label{ShuIhl00}
		\tr[(\dev_3 \widetilde{\Sigma})^2]&=\langle(\dev_3 \widetilde{\Sigma})\,  (\dev_3 \widetilde{\Sigma}),\id\rangle=\langle \widetilde{\Sigma}\,  (\dev_3 \widetilde{\Sigma}),\id\rangle=\langle C_p^{-1}\,\widetilde{\Sigma}\,  (\dev_3 \widetilde{\Sigma})\,C_p,\id\rangle\notag\\&=\langle C_p^{-1}\,\widetilde{\Sigma}\, ( \dev_3 \widetilde{\Sigma}\cdot C_p)^T,\id\rangle=\langle C_p^{-1}\,\widetilde{\Sigma}\, C_p\, (\dev_3 \widetilde{\Sigma})^T,\id\rangle\\&=\langle C_p^{-1}\,\widetilde{\Sigma}\, C_p, \dev_3 \widetilde{\Sigma}\rangle.\notag
	\end{align}
	We further see that
	\begin{align}\label{ShuIhl220}
		\langle C_p^{-1}\,\widetilde{\Sigma}\,& C_p, \dev_3 \widetilde{\Sigma}\rangle=
		\langle U_p^{-1}U_p^{-1}\,\widetilde{\Sigma}\, U_p\,U_p, \dev_3 \widetilde{\Sigma}\rangle
		=
		\langle U_p^{-1}\,\widetilde{\Sigma}\, U_p, U_p^{-1}\dev_3 \widetilde{\Sigma}\,U_p\rangle\notag\\&
		=
		\langle U_p^{-1}\,\widetilde{\Sigma}\, U_p, U_p^{-1} \widetilde{\Sigma}\,U_p-\frac{1}{3}\tr(\widetilde{\Sigma})\cdot \id\rangle\\&
		=
		\langle U_p^{-1}\,\widetilde{\Sigma}\, U_p, U_p^{-1} \widetilde{\Sigma}\,U_p-\frac{1}{3}\tr(U_p^{-1}\,\widetilde{\Sigma}\, U_p)\cdot \id\rangle\notag\\&
		=
		\langle U_p^{-1}\,\widetilde{\Sigma}\, U_p, \dev_3(U_p^{-1} \widetilde{\Sigma}\,U_p)\rangle=
		\langle \dev_3( U_p^{-1}\,\widetilde{\Sigma}\, U_p), \dev_3(U_p^{-1} \widetilde{\Sigma}\,U_p)\rangle\notag\\&=
		\| \dev_3( U_p^{-1}\,\widetilde{\Sigma}\, U_p)\|^2\geq 0,\notag
	\end{align}
	where $U_p^2=C_p$. Thus $\mathcal{F}^2$ is positive and $\mathcal{F}$ is well defined.
	
	Since
	$
	\dd D_ {\widetilde{\Sigma}} \Phi(\widetilde{\Sigma})=\frac{1}{\sqrt{\tr[(\dev_3 \widetilde{\Sigma})^2]}}\, (\dev_3 \widetilde{\Sigma})^T
	$
	the flow rule \eqref{frSMi} becomes
	\begin{align}\label{shutovremark}
		\frac{\rm d}{\rm dt} [C_p^{-1}]\, C_p&=- \frac{\lambda_{\rm p}^+}{\sqrt{\tr[(\dev_3 \widetilde{\Sigma})^2]}}\, (\dev_3 \widetilde{\Sigma})^T\notag\\& \Leftrightarrow\qquad
		C_p\,\frac{\rm d}{\rm dt} [C_p^{-1}]=-\frac{\lambda_{\rm p}^+}{\sqrt{\tr[(\dev_3 \widetilde{\Sigma})^2]}}\, \dev_3 \widetilde{\Sigma}\,,
	\end{align}
	Further, in view of \eqref{simetrictild0}, we obtain
	\begin{align}\label{echimiehehelm}
		\frac{\rm d}{\rm dt} [C_p^{-1}]&=-\frac{\lambda_{\rm p}^+}{\sqrt{\tr[(\dev_3 \widetilde{\Sigma})^2]}}\,C_p^{-1}\, \dev_3 \widetilde{\Sigma}
		\notag\\& \Leftrightarrow\qquad
		\frac{\rm d}{\rm dt} [C_p]=\frac{\lambda_{\rm p}^+}{\sqrt{\tr[(\dev_3 \widetilde{\Sigma})^2]}}( \dev_3 \widetilde{\Sigma})\, C_p\in {\rm Sym}(3).
	\end{align}
	Using  Lemma \ref{lemmaPSym} we obtain  that $C_p\in{\rm PSym}(3)$.
	
	We remark that the flow rule considered by Miehe \cite{Miehe95} (in this interpretation)  coincides with the flow rule \eqref{SHflow} considered by Helm \cite{helm2001formgedachtnislegierungen}, see Proposition  \ref{prophelmmiehe}.

	\begin{remark}\label{remarkconvexityMi}
		Although  the flow rule considered in this  interpretation of the  Miehe 1995 model \cite{Miehe95} has a subdifferential structure, the yield-function $\Phi$ is not convex. Hence, the flow rule is not a convex flow rule. In order to see the non-convexity of $\Phi(\widetilde{\Sigma})$ we observe first by looking at sublevel-sets that
		$$
		\Phi(\widetilde{\Sigma})=\sqrt{{\rm tr}( (\dev_3 \widetilde{\Sigma})^2)}-{\frac{2}{3}}\, \sigma_{\textbf{y}}^2 \  \text{is convex} \quad \Leftrightarrow\quad \widetilde{\Phi}(\widetilde{\Sigma})={\rm tr}[(\dev_3 \widetilde{\Sigma})^2]\  \text{is convex}.
		$$
		The second derivative for the simpler function $\widetilde{\Phi}(\widetilde{\Sigma})$ is
		$$
		\dd D^2_ {\widetilde{\Sigma}} \widetilde{\Phi}(\widetilde{\Sigma}).(H,H) = \langle (\dev_3 H)^T, \dev_3 H\rangle={\rm tr} [(\dev_3 H)^2],\qquad \forall\ \, \widetilde{\Sigma}, \, H\in \R^{3\times 3}.
		$$
		We know that ${\rm tr} [(\dev_3 H)^2]$ is not positive for all $ H\in \R^{3\times 3}$, since for the previous considered matrix $H$, such that
		$$
		\dev_3 H=\left(
		\begin{array}{ccc}
		-\frac{1}{2} & 1 & 2 \\
		-2 & -\frac{1}{2} & 3 \\
		-1 & -3 & -\frac{1}{2} \\
		\end{array}
		\right),
		$$
		we obtain  ${\rm tr}[(\dev_3 H)^2]=-2$. Therefore $\widetilde{\Phi}(\widetilde{\Sigma})$ is not convex, and thus $\Phi(\widetilde{\Sigma})$ cannot be convex.
	\end{remark}

	\section{The Lion 1997 multiplicative elasto-plasticity formulation in terms of the plastic metric $C_p=F_p^TF_p$}\setcounter{equation}{0}
	\label{appendixplasticity}

	This derivation was given by Lion \cite[Eq. (47.2)]{lion1997physically} in the general form (see also \cite[Eq. (6.33)]{helm2001formgedachtnislegierungen}) and by Dettmer-Reese \cite{dettmer2004theoretical} in the isotropic case. Following   \cite{dettmer2004theoretical} we consider   a perfect plasticity model for the plastic metric $C_p$ based on the flow rule
	\begin{align}\label{choicchi00}
		\frac{\rm d}{{\rm d t}}[C_p^{-1}]\in-F_p^{-1}\,\Partial \Chi (\dev_3\Sigma_{e})\,F_p^{-T}\in {\rm Sym}(3) \qquad \text{for} \quad \Sigma_{e}\in {\rm Sym}(3).
	\end{align}
	
	Again, in this model it is not clear from the outset, that it is a formulation in $C_p$ alone.  The goal of such a 6-dimensional formulation is to avoid any explicit  computation of the plastic distortion $F_p$. However, the right hand side of the above proposed flow rule is, in fact, a multivalued function in $C$ and $C_p^{-1}$ alone. Hence, we can express  the flow rule \eqref{choicchi00} entirely
	in the form\footnote{Note carefully, that $ f(C,C_p^{-1})$ is not necessarily symmetric. Moreover $C_p^{-1}\widehat{f}(C,C_p^{-1})\not\in {\rm Sym}(3)$ in general.}
	\begin{align}\label{choicchi0f}
		\frac{\rm d}{{\rm d t}}[C_p^{-1}]\,C_p\in f(C,C_p^{-1}).
	\end{align}
	In order to show this remarkable property (satisfied only for isotropic response), and to determine the explicit form of the function $f(C,C_p^{-1})$, in view of \eqref{isoplastalpha} we remark that \
	\begin{align*}
		F_p^{-1}\frac{\dev_3\Sigma_{e}}{\|\dev_3\Sigma_{e}\|}F_p^{-T}&=\frac{1}{\|\dev_3\Sigma_{e}\|}\left[F_p^{-1}\Sigma_{e}F_p^{-T}-\frac{1}{3}\,\tr(\Sigma_{e})\,
		C_p^{-1}\right],\notag\\ \tr(\Sigma_{e})&=2\,\langle \id, C_e\, (\alpha_1\, \id +\alpha_2\, C_e+\alpha_3 \, C_e^2),\\
		\|\Sigma_{e}\|^2&=4\langle C_e\, (\alpha_1\, \id +\alpha_2\, C_e+\alpha_3 \, C_e^2), C_e\, (\alpha_1\, \id +\alpha_2\, C_e+\alpha_3 \, C_e^2)\rangle,\notag\\
		\|\dev_3\Sigma_{e}\|&=\sqrt{\|\Sigma_{e}\|^2-\frac{1}{3}[\tr(\Sigma_{e})]^2}.\notag
	\end{align*}
	It is clear that
	$
	F_p^{-1}{\Sigma_{e}}F_p^{-T}=2\,F_p^{-1}\, C_e\, (\alpha_1\, \id +\alpha_2\, C_e+\alpha_3 \, C_e^2)\,F_p^{-T}\in{\rm Sym}(3),
	$
	and
	\begin{align}
		C_e&=F_e^TF_e=F_p^{-T}F^T F\, F_p^{-1}=F_p^{-T}C\, F_p^{-1},\notag\\
		\Sigma_{e}&= 2\,F_p^{-T}(\alpha_1\,C +\alpha_2\,C\,C_p^{-1}C+\alpha_3 \,C\, C_p^{-1}C\, C_p^{-1}C)\,F_p^{-1},\notag\\
		\tr(\Sigma_{e})&=2\,\tr(\alpha_1\,C\,C_p^{-1} +\alpha_2\,C\,C_p^{-1}C\,C_p^{-1}+\alpha_3 \,C\, C_p^{-1}C\, C_p^{-1}C\,C_p^{-1}),\notag\\
		\|\Sigma_{e}\|^2&=4\langle C_p^{-1}\widehat{f},\widehat{f}C_p^{-1}\rangle.\notag
	\end{align}
	where
	\begin{align*}
		\widehat{f}:=\alpha_1\,C +\alpha_2\,C\,C_p^{-1}C+\alpha_3 \,C\, C_p^{-1}C\, C_p^{-1}C.
	\end{align*}
	Hence, we deduce
	\begin{align}
		F_p^{-1}{\Sigma_{e}}F_p^{-T}&=2\,C_p^{-1}\,\widehat{f}(C,C_p^{-1})\,C_p^{-1}\in{\rm Sym}(3),\\
		\tr(\Sigma_{e})&=2\,\tr(\widehat{f}(C,C_p^{-1})\,C_p^{-1}),\qquad
		\|\Sigma_{e}\|^2=4\langle C_p^{-1}\,\widehat{f}(C,C_p^{-1}),\widehat{f}(C,C_p^{-1})\,C_p^{-1}\rangle,\notag\\
		\|\dev_3\Sigma_{e}\|&=2\,\sqrt{\tr[(\widehat{f}(C,C_p^{-1})\,C_p^{-1})^2]-\frac{1}{3}[\tr(\widehat{f}(C,C_p^{-1})C_p^{-1})]^2},\notag
	\end{align}
	where
	\begin{align}
		\widehat{f}(C,C_p^{-1}):=\alpha_1\,C +\alpha_2\,C\,C_p^{-1}C\,+\alpha_3 \,C\, C_p^{-1}C\, C_p^{-1}C\in{\rm Sym}(3),
	\end{align}
	and $\alpha_i=\alpha_i(I_1(C_e),I_2(C_e),I_3(C_e))$, according to \eqref{isoplastalpha}. Therefore,  the multivalued function $f(C,C_p^{-1})$  is given by
	\begin{align}\label{explf}
		{f}(C,C_p^{-1})&=\Bigg\{\frac{-\lambda^+_{\rm p}\,\dev_3(\,C_p^{-1}\widehat{f}(C,C_p^{-1}))}
		{\sqrt{\tr[(\widehat{f}(C,C_p^{-1})C_p^{-1})^2]-\frac{1}{3}[\tr(\widehat{f}(C,C_p^{-1})C_p^{-1})]^2}}
		\ \big| \ \lambda^+_{\rm p}\in \R_+\Bigg\}.
	\end{align}
	In Appendix A.1 we  give the specific expression for the functions $f(C,C_p^{-1})$ and $\widehat{f}(C,C_p^{-1})$ in case of  the Neo-Hooke energy.
	
	On the other hand, in view of equation \eqref{choicchi00} we also have
	\begin{align*}
		\frac{\rm d}{{\rm d t}}[C_p]\,C_p^{-1}=-C_p\frac{\rm d}{{\rm d t}}[C_p^{-1}]\in C_p\,F_p^{-1}\,\Partial \Chi (\dev_3\Sigma_{e})\,F_p^{-T}= F_p^{T}\Partial \Chi
		(\dev_3\Sigma_{e})F_p^{-T}.
	\end{align*}
	
	\noindent
	Hence, it follows that
	\begin{align}\label{f425}
		\frac{\rm d}{{\rm d t}}[C_p]\in F_p^{T}\,\Partial \Chi (\dev_3\Sigma_{e})\,F_p^{-T}C_p=F_p^{T}\Partial \Chi (\dev_3\Sigma_{e})F_p\,\in\, {\rm Sym}(3),
	\end{align}
	which establishes symmetry of $C_p$ whenever $C_p(0)\in{\rm Sym}(3)$.
	
	Another important question is whether the solution $C_p$ of the flow rule \eqref{choicchi00} is such that $\det C_p(t)=1$, for all $t\geq 0$. Let $C_p$ be
	the solution of the flow rule \eqref{choicchi00}. Then, we have
	\begin{align}\label{isused}
		C_p\frac{\rm d}{{\rm d t}}[C_p^{-1}]&=-\lambda^+_{\rm p}\, C_p\,F_p^{-1}\frac{\dev_3\Sigma_{e}}{\|\dev_3\Sigma_{e}\|}\,  F_p^{-T}=
		-\lambda^+_{\rm p}\, F_p^{T}F_pF_p^{-1}\,\frac{\dev_3\Sigma_{e}}{\|\dev_3\Sigma_{e}\|}\,  F_p^{-T}\notag\\&=-\frac{\lambda^+_{\rm p}}{2}\,
		\frac{\dev_3(F_p^{T}\Sigma_{e}\,F_p^{-T})}{\|\dev_3\Sigma_{e}\|},
	\end{align}
	which implies on the one hand
	\begin{equation*}
		\langle \frac{\rm d}{{\rm d t}}[C_p^{-1}]\, C_p,\id\rangle=\langle \frac{\rm d}{{\rm d t}}[C_p^{-1}],C_p\rangle=0.
	\end{equation*}
	On the other hand, the flow rule \eqref{choicchi00} together with $\det\, C_p(0)=1$ leads to
	$
	\det\, C_p(t)=1$, for all $\,t\geq 0.
	$
	
	Let us remark that, in view of \eqref{choicchi0f} and \eqref{explf} we have for the flow rule \eqref{choicchi00}
	\begin{align*}
		\frac{\rm d}{{\rm d t}}[C_p^{-1}]
		=\frac{-\lambda^+_{\rm p}}{\sqrt{\tr[(\widehat{f}(C,C_p^{-1})C_p^{-1})^2]-
				\frac{1}{3}[\tr[\widehat{f}(C,C_p^{-1})C_p^{-1}]]^2}}\,\,
		\underbrace{{\rm dev}_3 [\underbrace{\,C_p^{-1}\widehat{f}(C,C_p^{-1})}_{\not\in {\rm Sym}(3)}]\cdot\, C_p^{-1}}_{\in {\rm Sym}(3)},
	\end{align*}
	which is in concordance with the requirement $C_p\in {\rm Sym}(3)$, as can be seen from \eqref{f425}. Note that the above formula {cannot be read as $$\frac{\rm d}{{\rm d t}}[C_p^{-1}]
		=-\lambda^+_{\rm p}\frac{{\rm dev}_3 {\Sigma}}{\|{\rm dev}_3 {\Sigma}\|}\cdot\, C_p^{-1},$$ for some $\dd{\Sigma}$}, since
	$$[\tr(\widehat{f}(C,C_p^{-1})C_p^{-1})^2]-\frac{1}{3}[\tr(\widehat{f}(C,C_p^{-1})C_p^{-1})]^2\neq \|\dev_3(\widehat{f}(C,C_p^{-1})C_p^{-1})\|^2.$$  To see this, assume to the contrary that equality holds. Then we deduce
	\begin{align}\label{fpmtrdev0}
		[\tr(\widehat{f}(&C,C_p^{-1})C_p^{-1})^2]
		-\frac{1}{3}[\tr(\widehat{f}(C,C_p^{-1})C_p^{-1})]^2
		=\|\dev_3(\widehat{f}(C,C_p^{-1})C_p^{-1})\|^2\\&
		\Leftrightarrow \quad \langle \widehat{f}(C,C_p^{-1})C_p^{-1},(\widehat{f}(C,C_p^{-1})C_p^{-1})^T\rangle= \pm \langle \widehat{f}(C,C_p^{-1})C_p^{-1},\widehat{f}(C,C_p^{-1})C_p^{-1}\rangle.\notag
	\end{align}
	Since $\widehat{f}(C,C_p^{-1})\in {\rm Sym}(3)$, we obtain
	\begin{align}\label{ShuIhl00f}
		\tr[(\widehat{f}(C,C_p^{-1})C_p^{-1})^2]
		&=\langle C_p^{-1}\,\widehat{f}(C,C_p^{-1})C_p^{-1}\,  (\widehat{f}(C,C_p^{-1}),\id\rangle\notag\\&=\langle C_p^{-1}\,\widehat{f}(C,C_p^{-1}), \widehat{f}(C,C_p^{-1})C_p^{-1}\rangle.
	\end{align}
	Using that $C_p\in{\rm PSym}(3)$, we further deduce that
	\begin{align}\label{ShuIhl220f}
		\langle C_p^{-1}\,\widehat{f}(C,C_p^{-1}), \widehat{f}(C,C_p^{-1})C_p^{-1}\rangle&=
		\langle U_p^{-1}\,\widehat{f}(C,C_p^{-1})U_p^{-1}, U_p^{-1} \widehat{f}(C,C_p^{-1})U_p^{-1}\rangle\notag\\&=\|  U_p^{-1}\,\widehat{f}(C,C_p^{-1})U_p^{-1}\|^2,
	\end{align}
	where $U_p^2=C_p$.
	Therefore, from  \eqref{fpmtrdev0} we deduce
	\begin{align}\label{fpmtrdev1} \langle \widehat{f}(&C,C_p^{-1})C_p^{-1},(\widehat{f}(C,C_p^{-1})C_p^{-1})^T\rangle
		=\langle \widehat{f}(C,C_p^{-1})C_p^{-1},
		\widehat{f}(C,C_p^{-1})C_p^{-1}\rangle \notag\\& \Leftrightarrow \quad
		\widehat{f}(C,C_p^{-1})C_p^{-1}\in{\rm Sym}(3),
	\end{align}
	which is not true, in general. However, it is {an associated plasticity model} in the sense of Definition \ref{definitionpld}, see Proposition \ref{propositionegLion}.  We also remark that
	\begin{align}\label{fhat1}
		C_p\,\frac{\rm d}{{\rm d t}}[C_p^{-1}]
		=\frac{-\lambda^+_{\rm p}\,{\rm dev}_3 [\widehat{f}(C,C_p^{-1})\,C_p^{-1}]}{\sqrt{\tr[(\widehat{f}(C,C_p^{-1})C_p^{-1})^2]-
				\frac{1}{3}[\tr[\widehat{f}(C,C_p^{-1})C_p^{-1}]]^2}}\,
		.
	\end{align}

	In conclusion, using Lemma \ref{lemmaPSym}, we have
	\begin{remark}\label{remarkcpPsym}
		Any continuous  solution $C_p\in {\rm Sym}(3)$ of the flow rule \eqref{choicchi00} belongs in fact
		to ${\rm PSym}(3)$.
	\end{remark}
	As for the thermodynamical consistency, we remark that
	\begin{align}\label{ShuIhlDett}
		\frac{\rm d}{\rm dt} [\widetilde{W}(C\, C_p^{-1})]&=\langle D[\widetilde{W}(C\, C_p^{-1})],C\, \frac{\rm d}{\rm dt}[ C_p^{-1}]\rangle=
		\langle C\, D_C[\widetilde{W}(C\, C_p^{-1})]\, C_p, \frac{\rm d}{\rm dt}[ C_p^{-1}]\rangle\notag\\
		&=\frac{1}{2}
		\langle \widetilde{\Sigma}\, C_p, \frac{\rm d}{\rm dt}[ C_p^{-1}]\rangle=\frac{1}{2}
		\langle C_p^{-1}\,\widetilde{\Sigma}\, C_p, C_p \frac{\rm d}{\rm dt}[ C_p^{-1}]\rangle\\
		&=
		-\frac{1}{2}\langle C_p^{-1}\,\widetilde{\Sigma}\,  C_p, \frac{\rm d}{\rm dt}[C_p]\, C_p^{-1}\rangle\notag=
		-\frac{1}{4}\,\frac{\lambda_{\rm p}^+}{\|\dev_3 \widetilde{\Sigma}\|}\langle C_p^{-1}\,\widetilde{\Sigma}\, C_p, \dev_3 \widetilde{\Sigma}\rangle,
	\end{align}
	which, using the formula $ F_p^{T}\Sigma_e\,F_p^{-T}= \widetilde{\Sigma}$, leads to
	\begin{align}\label{ShuIhlDett1}
		\frac{\rm d}{\rm dt} [&\widetilde{W}(C\, C_p^{-1})]=
		-\frac{1}{4}\frac{\lambda_{\rm p}^+}{\|\dev_3 (F_p^{T}\Sigma_e\,F_p^{-T})\|}\langle C_p^{-1}\,F_p^{T}\Sigma_e\,F_p^{-T}\, C_p, \dev_3 ( F_p^{T}\Sigma_e\,F_p^{-T})\rangle\notag
		\\&=
		-\frac{1}{4}\frac{\lambda_{\rm p}^+}{\|\dev_3 (F_p^{T}\Sigma_e\,F_p^{-T})\|}\langle \Sigma_e, \Sigma_e-\frac{1}{3}\tr( \Sigma_e)\cdot \id \rangle
		\\&=
		-\frac{1}{4}\frac{\lambda_{\rm p}^+}{\|\dev_3 (F_p^{T}\Sigma_e\,F_p^{-T})\|}\|\dev_3 \Sigma_e\|^2\leq 0.\notag
	\end{align}
	Note that this proof of thermodynamical consistency may be criticized because it involves the variable $F_p$, which should not appear at all. However, we may also use \eqref{ShuIhl00} and \eqref{ShuIhl220} to obtain
	\begin{align}\label{ShuIhlDett}
		\frac{\rm d}{\rm dt} [\widetilde{W}(C\, C_p^{-1})]&=
		-\frac{1}{4}\,\frac{\lambda_{\rm p}^+}{\|\dev_3 \widetilde{\Sigma}\|}\, \tr[(\dev_3 \widetilde{\Sigma})^2]\\
		&=
		-\frac{1}{4}\,\frac{\lambda_{\rm p}^+}{\|\dev_3 \widetilde{\Sigma}\|}\,\| \dev_3( U_p^{-1}\,\widetilde{\Sigma}\, U_p)\|^2\leq 0,
	\end{align}
	\newpage
	We may summarize the properties  of the Lion 1997 model:
	\begin{itemize}
		\item[i)] from \eqref{choicchi00} it follows  that $C_p(t)\in{\rm Sym}(3)$;
		\item[ii)] plastic incompressibility: from  \eqref{choicchi00} together with $\det C_p(0)=1$ it follows that
		$\det C_p(t)=1$;
		\item[iii)] for the isotropic case, the right hand-side of \eqref{choicchi00} is a function of $C_p^{-1}$ and $C$ alone;
		\item[iii)] from i) and ii) together and using Lemma \ref{lemmaPSym} it follows that $C_p(t)\in{\rm PSym}(3)$;
		\item[v)] it is thermodynamically correct;
		\item[vi)] it is {an associated plasticity model} in the sense of Definition \ref{definitionpld}, see Proposition \ref{propositionegLion}.
	\end{itemize}
	
	\begin{remark}\label{remarkcompar}{\rm (Simo-Miehe 1992 model vs. Lion 1997 model)}
		In the anisotropic case, the flow rule proposed  by Simo and Miehe \cite{simo1993recent} (and later by Reese and Wriggers \cite{Reese97a} and Miehe \cite{Miehe92}) is not completely equivalent with the flow rule proposed by Lion (see also \cite{dettmer2004theoretical,brepols2014numerical}), since $\|\dev_3 \tau_e\|\neq \|\dev_3 \Sigma_{e} \|$ does not hold true in general. However, the difference is nearly absorbed by the positive plastic multipliers. The models may differ due to different yield conditions, but the flow rules are similar, having the same performance with respect to the thermodynamic consistency. Both models are  consistent according to our Definition \ref{consistentdef}, but we may not switch between them, since different elastic domains are considered, namely $\mathcal{E}_{{\Sigma}_e}$ and $\mathcal{E}_{{\tau}_e}$, respectively. This is in fact the main difference between these two models. Having different elastic domains we have different boundary points, since a point of the boundary of $\mathcal{E}_{{\tau}_e}$ is not necessarily on the boundary of $\mathcal{E}_{{\tau}_e}$. Hence, in these two flow rules we have a different behaviour corresponding to the indicator function of different domains. The material may reach the boundary of the elastic domain $\mathcal{E}_{{\tau}_e}$, while it is strictly inside the elastic domain $\mathcal{E}_{{\Sigma}_e}$, for the same local response.
	\end{remark}
	
	However, we have the following result:
	\begin{proposition}\label{eqlionsimo}
		In the isotropic case  the flow rule proposed  by Simo and Miehe \cite{simo1993recent} is  equivalent with the flow rule proposed by Lion \cite{lion1997physically}.
	\end{proposition}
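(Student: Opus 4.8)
The two flow rules to be compared, after transformation to the reference configuration, are the Simo--Miehe rule in the isotropic form \eqref{frsm300},
\[
\frac{\rm d}{\rm dt}[C_p^{-1}]\in -2\, F_p^{-1}\,[\Partial_{\Sigma_e}\Chi(\dev_3\Sigma_e)]\,F_p^{-T},
\]
and Lion's rule \eqref{choicchi00},
\[
\frac{\rm d}{{\rm d t}}[C_p^{-1}]\in -F_p^{-1}\,\Partial\Chi(\dev_3\Sigma_e)\,F_p^{-T}.
\]
The plan is to show that these two multivalued relations define the same solution set, the point being that in the isotropic case both are governed by the \emph{same} elastic domain $\mathcal{E}_{\rm e}(\Sigma_e,\tfrac23\,{\boldsymbol{\sigma}}_{\!\mathbf{y}}^2)$ and that the subdifferential appearing in them is a cone, so the spurious factor $2$ is immaterial.

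First I would pass to the explicit single-valued forms. By \eqref{frsm2} the Simo--Miehe rule reads $\frac{\rm d}{\rm dt}[C_p^{-1}]=-2\,\lambda^+_{\rm p}\, F^{-1}[\tfrac{\dev_3\tau_e}{\|\dev_3\tau_e\|}B_e]F^{-T}$ with $\lambda^+_{\rm p}\geq 0$, $\Phi(\tau_e)\le 0$, $\lambda^+_{\rm p}\Phi(\tau_e)=0$, while Lion's rule, written with a plastic multiplier, is $\frac{\rm d}{\rm dt}[C_p^{-1}]=-\mu^+_{\rm p}\, F_p^{-1}\tfrac{\dev_3\Sigma_e}{\|\dev_3\Sigma_e\|}F_p^{-T}$ with $\mu^+_{\rm p}\geq 0$, $\Phi(\Sigma_e)\le 0$, $\mu^+_{\rm p}\Phi(\Sigma_e)=0$. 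The key ingredient is the identity \eqref{difsimomiehe},
\[
F^{-1}\!\left[\frac{\dev_3\tau_e}{\|\dev_3\tau_e\|}\,B_e\right]F^{-T}=F_p^{-1}\!\left[\frac{\dev_3\Sigma_e}{\|\dev_3\Sigma_e\|}\right]F_p^{-T},
\]
already established in Section \ref{simosection} from $\tau_e B_e=B_e\tau_e$ (isotropy), $\tr(\tau_e)=\tr(\Sigma_e)$, and Lemma \ref{remarktauesigmae}. Substituting this into the Simo--Miehe rule turns its right-hand side into $-2\lambda^+_{\rm p}\,F_p^{-1}\tfrac{\dev_3\Sigma_e}{\|\dev_3\Sigma_e\|}F_p^{-T}$; setting $\mu^+_{\rm p}:=2\lambda^+_{\rm p}\in\R_+$ reproduces Lion's rule verbatim, and the KKT constraints transfer because $\|\dev_3\tau_e\|=\|\dev_3\Sigma_e\|$ by Lemma \ref{remarktauesigmae}, so $\Phi(\tau_e)\le 0\Leftrightarrow\Phi(\Sigma_e)\le 0$, i.e.\ the two elastic domains coincide. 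Reading \eqref{difsimomiehe} from right to left and setting $\lambda^+_{\rm p}:=\mu^+_{\rm p}/2$ gives the converse inclusion, and equivalence follows.

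Equivalently, at the level of the normal-cone formulations \eqref{frsm300} and \eqref{choicchi00}, one notes that $2\,\Partial\Chi(\dev_3\Sigma_e)=\Partial\Chi(\dev_3\Sigma_e)$, since the subdifferential of the indicator function of a convex set is a cone and hence invariant under multiplication by the positive scalar $2$; together with the coincidence of the two elastic domains this shows directly that the right-hand sides of the two flow rules are the same set. I expect the only genuine point requiring care is the assertion that the elastic domains truly agree --- precisely where isotropy enters, via Lemma \ref{remarktauesigmae} (and the commutativity $\tau_e B_e=B_e\tau_e$ behind \eqref{difsimomiehe}); in the anisotropic case $\|\dev_3\tau_e\|\neq\|\dev_3\Sigma_e\|$, which is exactly the obstruction discussed in Remark \ref{remarkcompar}.
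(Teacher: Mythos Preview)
Your proposal is correct and follows essentially the same approach as the paper: the paper's proof is the one-liner ``compare the flow rules \eqref{frsm300} and \eqref{choicchi00},'' and you have simply made explicit what that comparison entails (identity \eqref{difsimomiehe}, Lemma \ref{remarktauesigmae} for the coincidence of elastic domains, and the cone property of $\Partial\Chi$ to absorb the factor $2$). These details were already developed in Section~\ref{simosection} when deriving \eqref{frsm300}, which is why the paper feels entitled to omit them at this point.
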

	\begin{proof}
		We compare the flow rules \eqref{frsm300} and \eqref{choicchi00} and the proof is complete.
	\end{proof}

	\section{The Simo and Hughes  1998 plasticity formulation in terms of a plastic metric}\setcounter{equation}{0}

	The book \cite{Simo98b} has been edited years after the untimely  death of J.C. Simo. In this book also a finite strain plasticity model is proposed. However, this model has a  subtle fundamental deficiency which we aim to describe in the interest of the reader. The flow rule  considered in  \cite[page 310]{Simo98b} is
	\begin{equation}\label{simofr}
		\frac{\rm d }{{\rm d t}}[\overline{C}_p^{\,-1}]=-\frac{2}{3}\lambda^+_{\rm p}\,\tr(B_e)\,F^{-1}\frac{\dev_n\tau_e}{\|\dev_n\tau_e\|}F^{-T},\qquad \overline{C}_p=\frac{C_p}{\det C_p^{1/3}}\,,
	\end{equation}
	where $B_e=F_e F_e^T$,
	$\tau_e =2\, F_e \, D_{C_e}[W(C_e)]\, F_e^T=2\, B_e \, D_{B _e}[W(B_e)]$ is the elastic Kirchhoff stress tensor and $\lambda^+_{\rm p}\geq 0$ is the consistency parameter. If the plastic flow is isochoric then $\det F_p=\det C_p=1$. However, we must always have $\det \overline{C}_p=1=\det \overline{C}_p^{\,-1}$ by definition of $\overline{C}_p$. Since $F_e=F\, F_p^{-1}$, we have $\tr(B_e)=\langle F_p^{-T}F^T\,F\, F_p^{-1}\rangle=\langle \id, C\, C_p^{-1}\rangle=\tr(C\, C_p^{-1})$. Moreover, note that for elastically isotropic materials it holds
	\begin{align}\label{isocasesimo}
		D_{C_e}[W({C_e})]&=\alpha_1\, \id +\alpha_2\, C_e+\alpha_3 \, C_e^2\in{\rm Sym}(3),\notag\\ \tau_e &=2\, F_e \,[\alpha_1\, \id +\alpha_2\, C_e+\alpha_3 \, C_e^2]\, F_e^T,
	\end{align}
	where $\alpha_1,\alpha_2,\alpha_3$ are scalar functions of the invariants of $C_e$ which are functions of $C\, C_p^{-1}$, see Lemma \ref{lemaplasticn}.
	Since  $C_e=F_p^{-T}C\, F_p^{-1}$, we obtain
	\begin{align}\label{tauisocasesimo}
		\tau_e&= 2\, F\, F_p^{-1} \,[\alpha_1\, \id +\alpha_2\, F_p^{-T}C\, F_p^{-1}+\alpha_3 \, F_p^{-T}C\, F_p^{-1}F_p^{-T}C\, F_p^{-1}] F_p^{-T}\, F^T\notag\\&=2\, F \,f_1(C,C_p^{-1}) \, F^T,
	\end{align}
	with
	$
	f_1(C,C_p^{-1}):=\alpha_1\, C_p^{-1} +\alpha_2\, C_p^{-1} C\, C_p^{-1} +\alpha_3 \, C_p^{-1} C\, C_p^{-1} \,C\, C_p^{-1}\in {\rm Sym}(3).
	$
	Thus, for elastically isotropic materials  we deduce
	\begin{align}\label{taudevsimo}
		F^{-1}\,[{\dev_n\tau_e}]\,F^{-T}=&2\,f_1(C,C_p^{-1}) \notag-\frac{2}{3}\tr(F\, f_1(C,C_p^{-1}) \, F^T)\, C^{-1}\notag\\=&2\,f_1(C,C_p^{-1})-\frac{2}{3}\langle f_1(C,C_p^{-1}),C\rangle\, C^{-1}\in {\rm Sym}(3),\\
		\|\dev_3\tau_e\|=&\sqrt{\|\tau_e\|^2-\frac{1}{9}[\tr(\tau_e)]^2}\notag\\
		=&2\sqrt{\langle  f_1(C,C_p^{-1}) \cdot C,C\cdot \,f_1(C,C_p^{-1}) \rangle^2-\frac{1}{9}\langle f_1(C,C_p^{-1})  ,C \rangle^2}.\notag
	\end{align}
	Hence, $F^{-1}\frac{\dev_n\tau_e}{\|\dev_n\tau_e\|}F^{-T}\in {\rm Sym}(3)$  is a function of $C, {C}_p^{\,-1}$. Therefore the flow rule \eqref{simofr} can  entirely be expressed in terms of $C $ and $ {C}_p^{\,-1}$ alone.

	\begin{remark}
		It is not true, in general, that the right hand side of the flow rule \eqref{simofr} is in concordance with $\det \overline{C}_p(t)=1$, assuming that $\det \overline{C}_p(0)=1$.
	\end{remark}
	\begin{proof}
		From \eqref{simofr} we obtain by right multiplication with $\overline{C}_p$
		\begin{align}\label{simofr0}
			\frac{\rm d }{{\rm d t}}[\overline{C}_p^{\,-1}]\,\overline{C}_p=-\frac{2}{3}\,(\det C_p)^{-1/3}\,\lambda^+_{\rm p}\,\tr(C\,C_p^{-1})\,F_p^{-1}F_e^{-1}\frac{\dev_n\tau_e}{\|\dev_n\tau_e\|}F_e^{-T}F_p.
		\end{align}
		On the other hand, we have\footnote{Let us remark that $\frac{\rm d}{{\rm d t}}[\det \overline{C}_p^{\,-1}]=\det \overline{C}_p^{\,-1}\langle \id, \frac{\rm d}{{\rm d t}}[ \overline{C}_p^{\,-1}] \overline{C}_p\rangle$ shows that $\det \overline{C}_p^{\,-1}\!\!>0$ by direct integration of the ordinary differential equation.}
		\begin{align*}
			\frac{\rm d}{{\rm d t}}[\det \overline{C}_p^{\,-1}]=\langle {\rm Cof}\,\,\overline{C}_p^{\,-1}, \frac{\rm d}{{\rm d t}}[ \overline{C}_p^{\,-1}]\rangle=\det \overline{C}_P^{\,-1}\langle \overline{C}_p, \frac{\rm d}{{\rm d t}}[ \overline{C}_p^{\,-1}]\rangle
			=\det \overline{C}_P^{\,-1}\langle \id, \frac{\rm d}{{\rm d t}}[ \overline{C}_p^{\,-1}] \overline{C}_p\rangle.
		\end{align*}
		Hence, we deduce
		\begin{align}\label{simofr0}
			\frac{\rm d}{{\rm d t}}[\det \overline{C}_p^{\,-1}]=-\frac{2}{3}\,(\det C_p)^{-1/3}\,\lambda^+_{\rm p}\,\tr(C\,C_p^{-1})\langle\,F_e^{-1}
			\frac{\dev_n\tau_e}{\|\dev_n\tau_e\|}F_e^{-T},\id\rangle.
		\end{align}
		Since $F_e^{-1}
		\frac{\dev_n\tau_e}{\|\dev_n\tau_e\|}F_e^{-T}$ is not necessarily  a trace free matrix, we can not conclude that $\det \overline{C}_p^{\,-1}(t)=const\!.$ for all $t>0$. For instance, for elastically isotropic materials   (see \eqref{taudevsimo}) we
		have
		\begin{align}\label{taudevsimo2}
			{\dev_3\tau_e}
			=&2 \langle f_1(C,C_p^{-1}),C_p\rangle-\frac{2}{3}\langle f_1(C,C_p^{-1}),C\rangle\langle C^{-1},C_p \rangle
			\\
			=&2 \,\alpha_1\left[\langle C_p^{-1},C_p\rangle-\frac{1}{3}\langle C_p^{-1},C\rangle\langle C^{-1},C_p \rangle\right]\notag\\&+2 \,\alpha_2\left[\langle  C_p^{-1} C\, C_p^{-1},C_p\rangle-\frac{1}{3}\langle  C_p^{-1} C\, C_p^{-1},C\rangle\langle C^{-1},C_p \rangle\right]\notag\\&+2 \,\alpha_3\left[\langle C_p^{-1} C\, C_p^{-1} \,C\, C_p^{-1},C_p\rangle-\frac{1}{3}\langle C_p^{-1} C\, C_p^{-1} \,C\, C_p^{-1},C\rangle\langle C^{-1},C_p \rangle\right],\notag
		\end{align}
		which shows that $F_e^{-1}
		\frac{\dev_n\tau_e}{\|\dev_n\tau_e\|}F_e^{-T}$ is not necessarily a trace free matrix, see Appendix A.4.
	\end{proof}

	Summarizing the properties of the flow rule \eqref{simofr} we have:
	\begin{itemize}
		\item[i)] it is thermodynamically correct;
		\item[ii)] the right hand side is  a function of $C$ and $\overline{C}_p^{-1}$ only;
		\item[iii)] from this flow rule it follows $\overline{C}_p(t)\in {\rm Sym}(3)$ and $\det \overline{C}_p(t)>0$. Hence, it follows that $\overline{C}_p(t)\in {\rm PSym}(3)$;
		\item[iv)] plastic incompressibility: however, it does {\bf not follow} from the flow rule  that $\boldsymbol{\det \overline{C}_p(t)=1}$ (which must hold by the very definition of $\overline{C}_p$, since the right hand side is not trace-free, in general;
		\item[v)] it is {\bf not an associated plasticity model} in the sense of Definition \ref{definitionpld}.
	\end{itemize}

	\section{The Helm 2001 model}\label{helm}
	\setcounter{equation}{0}
	In this section we consider the model proposed by Helm \cite{helm2001formgedachtnislegierungen},  Vladimirov, Pietryga and Reese \cite[Eq. 25]{vladimirov2008modelling} (see also \cite{reese2008finite} and \cite[Eq. 55]{shutov2008finite} and the model considered by Brepols, Vladimirov and  Reese \cite[page 16]{brepols2014numerical}, Shutov and Ihlemann \cite[Eq. 80]{shutov2013analysis}). We prove later that this model is similar to the model considered by Miehe \cite{miehe1995theory} in 1995, provided certain interpretations are included.  Vladimirov, Pietryga and Reese \cite[Eq. 25]{vladimirov2008modelling} considered the following flow rule
	\begin{align}\label{SHflow}
		\frac{\rm d}{\rm dt}[ C_p]=\lambda_{\rm p}^+\,\frac{\dev_3 \widetilde{\Sigma}}{\sqrt{\tr( (\dev_3 \widetilde{\Sigma})^2)}}\cdot C_p\,,
	\end{align}
	where $\widetilde{\Sigma}=2\,C\, D_C[\widetilde{W}(C\, C_p^{-1})]$ is not necessarily symmetric for general $C_p\in{\rm PSym}(3)$, while $(\dev_3 \widetilde{\Sigma})\cdot C_p\in {\rm Sym}(3)$, see Section \ref{Miehe1995}. Therefore, we have
	$
	\dd\frac{\rm d}{\rm dt}[ C_p]\in {\rm Sym}(3).
	$
	The flow rule \eqref{SHflow} implies
	\begin{align}\label{ShuIhl}
		\frac{\rm d}{\rm dt} [\widetilde{W}(C\, C_p^{-1})]&=\langle D[\widetilde{W}(C\, C_p^{-1})],C\, \frac{\rm d}{\rm dt}[ C_p^{-1}]\rangle=
		\langle C\, D_C[\widetilde{W}(C\, C_p^{-1})]\, C_p, \frac{\rm d}{\rm dt}[ C_p^{-1}]\rangle\notag\\
		&=\frac{1}{2}
		\langle \widetilde{\Sigma}\, C_p, \frac{\rm d}{\rm dt}[ C_p^{-1}]\rangle=\frac{1}{2}
		\langle C_p^{-1}\,\widetilde{\Sigma}\, C_p, C_p \frac{\rm d}{\rm dt}[ C_p^{-1}]\rangle\\
		&=
		-\frac{1}{2}\frac{\lambda_{\rm p}^+\,}{\sqrt{\tr( (\dev_3 \widetilde{\Sigma})^2)}}\langle C_p^{-1}\,\widetilde{\Sigma}\, C_p, \dev_3 \widetilde{\Sigma}\rangle.\notag
	\end{align}
	Thus, using \eqref{ShuIhl00} and \eqref{ShuIhl220} we deduce
	\begin{align}\label{ShuIhl002}
		\frac{\rm d}{\rm dt} [\widetilde{W}(C\, C_p^{-1})]&=-\frac{1}{2}\frac{\lambda_{\rm p}^+}{\sqrt{\tr( (\dev_3 \widetilde{\Sigma})^2)}}\,\tr[(\dev_3 \widetilde{\Sigma})^2]\notag\\
		&=
		-\frac{\lambda_{\rm p}^+}{2}\| \dev_3( U_p^{-1}\,\widetilde{\Sigma}\, U_p)\|\leq 0,
	\end{align}
	which shows thermodynamical consistency.

	Summarizing, the Helm 2001  (Reese 2008 and Shutov-Ihlemann 2014) model has the following properties:
	\begin{itemize}
		\item[i)] from \eqref{ShuIhl002} it follows that it is {thermodynamically correct};
		\item[ii)] plastic incompressibility: from  \eqref{SHflow} and \eqref{derivaredet} it follows that
		$\det C_p(t)=1$;
		\item[iii)] for the isotropic case, the right hand-side of the flow rule \eqref{SHflow} is a function of $C_p^{-1}$ and $C$ alone;
		\item[iv)]  from $
		\frac{\rm d}{\rm dt}[ C_p]\in {\rm Sym}(3).
		$ it follows, in the isotropic case, that $C_p(t)\in{\rm Sym}(3)$;
		\item[v)] from ii) and iii) together and using Lemma \ref{lemmaPSym} it follows that $C_p(t)\in{\rm PSym}(3)$;
		\item[vi)] it has formally subdifferential structure, see Proposition \ref{prophelm}. However, the elastic domain
		$\mathcal{E}_{\rm e}(\widetilde{\Sigma},{\frac{2}{3}}\, \sigma_{\textbf{y}}^2)$  is not convex w.r.t $\widetilde{\Sigma}$, see Remark \ref{remarkconvexityMi}.
	\end{itemize}
	
	Moreover, we see that the following result holds:
	
	\begin{proposition}\label{prophelmmiehe}
		The flow rule  considered by Helm \cite{helm2001formgedachtnislegierungen}   coincides with the flow rule \eqref{SHflow}, i.e. with  the interpretation of Miehe's proposal \cite{Miehe95} presented in Section \ref{Miehe1995}.
	\end{proposition}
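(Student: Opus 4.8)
The plan is to observe that the transformation of the Shutov interpretation of Miehe's flow rule already carried out in Section \ref{Miehe1995} lands precisely on Helm's flow rule, so that the proof reduces to a comparison of two displayed equations, exactly in the spirit of the proof of Proposition \ref{eqlionsimo}. No new ingredient is needed beyond what was established there.

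Concretely, I would start from the Miehe 1995 flow rule \eqref{frSMi}, insert the computed derivative $D_{\widetilde{\Sigma}}\Phi(\widetilde{\Sigma})=\frac{1}{\sqrt{\tr((\dev_3\widetilde{\Sigma})^2)}}\,(\dev_3\widetilde{\Sigma})^T$ to obtain \eqref{shutovremark}, i.e. $C_p\,\frac{\rm d}{\rm dt}[C_p^{-1}]=-\frac{\lambda_{\rm p}^+}{\sqrt{\tr((\dev_3\widetilde{\Sigma})^2)}}\,\dev_3\widetilde{\Sigma}$, and then multiply on the right by $C_p$ and use the elementary identity $\frac{\rm d}{\rm dt}[C_p]=-C_p\,\frac{\rm d}{\rm dt}[C_p^{-1}]\,C_p$ (obtained by differentiating $C_p\,C_p^{-1}=\id$). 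This yields $\frac{\rm d}{\rm dt}[C_p]=\frac{\lambda_{\rm p}^+}{\sqrt{\tr((\dev_3\widetilde{\Sigma})^2)}}\,(\dev_3\widetilde{\Sigma})\,C_p$, which is exactly \eqref{echimiehehelm}, and this is verbatim the Helm flow rule \eqref{SHflow}. Since $C_p$ is invertible and the passage $C_p^{-1}\leftrightarrow C_p$ is a bijection, every step in this chain is reversible, so \eqref{SHflow} implies \eqref{frSMi} as well, and the two flow rules coincide.

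The one place requiring care, rather than a genuine obstacle, is that $\widetilde{\Sigma}\notin{\rm Sym}(3)$ in general and $\sqrt{\tr((\dev_3\widetilde{\Sigma})^2)}\neq\|\dev_3\widetilde{\Sigma}\|$, so the transposes appearing in $D_{\widetilde{\Sigma}}\Phi$ and in \eqref{shutovremark} must be tracked; the fact that resolves this is \eqref{simetrictild0}, which guarantees that the common right-hand side $(\dev_3\widetilde{\Sigma})\,C_p$ is symmetric and hence that both formulations produce a symmetric $\frac{\rm d}{\rm dt}[C_p]$ consistent with $C_p\in{\rm PSym}(3)$. I expect no further difficulty: the statement is simply an identification of two flow rules that the preceding analysis has already brought into the same form.
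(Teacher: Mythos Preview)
Your proposal is correct and follows essentially the same approach as the paper: the paper's own proof simply points to \eqref{SHflow} and \eqref{echimiehehelm}, observing that the latter---which was already derived in Section~\ref{Miehe1995} from Miehe's flow rule \eqref{frSMi} via exactly the chain \eqref{shutovremark}$\to$\eqref{echimiehehelm} you describe---is identical to Helm's flow rule \eqref{SHflow}. Your write-up is merely a more explicit recapitulation of that chain, including the reversibility remark, and nothing substantively different is involved.
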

	\begin{proof} The proof follows from \eqref{SHflow} and combined  with \eqref{echimiehehelm}.
	\end{proof}

	\section{The Grandi-Stefanelli  2015 model}\setcounter{equation}{0}
	
	In this section we present  a model based on one representation used  by Grandi and Stefanelli \cite{GrandiStefanelli} and previously used by Frigeri and Stefanelli \cite[page 7]{frigeri2012existence}.
	We start by computing
	\begin{align}
		\frac{\rm d}{\rm dt} \widetilde{W}(C\, C_p^{-1})&=\langle D\widetilde{W}(C\,C_p^{-1}), C\, \frac{\rm d}{\rm dt} [C_p^{-1}]\rangle
		=\langle {\rm sym}[C\,D\widetilde{W}(C\,C_p^{-1})],  \frac{\rm d}{\rm dt} [C_p^{-1}]\rangle\notag
		\\
		&=\langle \underbrace{\sqrt{C_p}^{\,-1}\, {\rm sym}[C\,D\widetilde{W}(C\,C_p^{-1})]\,\sqrt{C_p}^{\,-1}}_{:=\frac{1}{2}\,\overset{\circ}{\Sigma}\,\in\, {\rm Sym}(3)}, \sqrt{C_p}\,\frac{\rm d}{\rm dt} [C_p^{-1}]\, \sqrt{C_p}\rangle.
	\end{align}
	It is now easy to see that, if we choose
	\begin{align}\label{Mielkeflow}
		\sqrt{C_p}\,\frac{\rm d}{\rm dt} [C_p^{-1}]\, \sqrt{C_p}\in -\Partial_{\overset{\circ}{\Sigma}} \Chi(\dev_3\overset{\circ}{\Sigma}),
	\end{align}
	where
	$ {\Chi}(\dev_3\overset{\circ}{\Sigma})$ is the indicator function of the convex elastic domain
	\begin{align*}
		\overset{\circ}{\mathcal{E}}_e(\overset{\circ}{\Sigma},{\frac{1}{3}}\, \sigma_{\textbf{y}}^2):=\left\{\overset{\circ}{\Sigma}\in {\rm Sym}(3)\, | \, \|\dev_3\overset{\circ}{\Sigma}\|^2\leq {\frac{1}{3}}\, \sigma_{\textbf{y}}^2\right\},
	\end{align*}
	then $C_p\in{\rm Sym}(3)$ and the reduced dissipation  inequality $\frac{\rm d}{\rm dt} \widetilde{W}(C\, C_p^{-1})\leq 0$ is satisfied. Thus, the model is thermodynamically correct.
	We also remark  that the flow rule \eqref{Mielkeflow} implies
	\begin{align*}
		\tr(\,\frac{\rm d}{\rm dt} [C_p^{-1}]\, {C_p})=\langle \frac{\rm d}{\rm dt} [C_p^{-1}] \sqrt{C_p}\,\sqrt{C_p},\id\rangle=\langle \sqrt{C_p}\,\frac{\rm d}{\rm dt} [C_p^{-1}] \sqrt{C_p},\id\rangle=0.
	\end{align*}
	Hence, we obtain $\det C_p(t)=1$ and further $C_p(t)\in{\rm PSym}(3)$.
	
	Using Lemma \ref{lemmasigmatild}, we give next some new representations of the stress-tensor
	\begin{align*}
		\overset{\circ}{\Sigma}:=2\,\sqrt{C_p}^{\,-1}\, {\rm sym}[C\,D\widetilde{W}(C\,C_p^{-1})]\,\sqrt{C_p}^{\,-1}=2\,{\rm sym}\left[\sqrt{C_p}^{\,-1}\, (C\,D\widetilde{W}(C\,C_p^{-1}))\,\sqrt{C_p}^{\,-1}\right]
	\end{align*}
	in terms of the stress tensors $\Sigma_e, \widetilde{\Sigma}$ and $\tau_e$, respectively. From \eqref{definitiesigma} we obtain
	$
	C\,D\widetilde{W}(C\,C_p^{-1})=\frac{1}{2}\,\widetilde{\Sigma}\, C_p
	$. We also use $F_p=R_p\, U_p=R_p\, \sqrt{C_p}$ and $F=F_eF_p$. Hence,  we deduce
	\begin{align*}
		\overset{\circ}{\Sigma}&={\rm sym}(\sqrt{C_p}^{\,-1}\, \widetilde{\Sigma}\,C_p\,\sqrt{C_p}^{\,-1})={\rm sym}(\sqrt{C_p}^{\,-1}\, \widetilde{\Sigma}\,\sqrt{C_p}),\\
		\overset{\circ}{\Sigma}&={\rm sym}(\sqrt{C_p}^{\,-1}\,F_p^T\,{\Sigma}_e\,F_p^{-T}\,\sqrt{C_p})={\rm sym}(\sqrt{C_p}^{\,-1}\,\sqrt{C_p}\,R_p^T\,{\Sigma}_e\,R_p\, \sqrt{C_p}^{-1}\,\sqrt{C_p})\\&={\rm sym}(R_p^T\,{\Sigma}_e\,R_p),\notag\\
		\overset{\circ}{\Sigma}&={\rm sym}(\sqrt{C_p}^{\,-1}\, F^T\,\tau_e\,F^{-T}\,\sqrt{C_p})={\rm sym}(\sqrt{C_p}^{\,-1}\, F_p^TF_e^T\,\tau_e\,F_e^{-T}F_p^{-T}\,\sqrt{C_p})\\&={\rm sym}(R_p^TF_e^T\,\tau_e\,F_e^{-T}R_p).\notag
	\end{align*}
	Note that $\Sigma_e$ is symmetric in case of elastic isotropy. Hence, for the isotropic case, we have
	\begin{align}\label{sesois}
		\overset{\circ}{\Sigma}&=R_p^T\,{\Sigma}_e\,R_p,\qquad \overset{\circ}{\Sigma}=R_p^T\, F_e^T\,{\tau}_e\, F_e^{-T}\,R_p .
	\end{align}
	However, we have
	$
	\|\overset{\circ}{\Sigma}\|^2=\|R_p^T\,{\Sigma}_e\,R_p\|^2=\|{\Sigma}_e\|^2
	,\
	\tr(\overset{\circ}{\Sigma})=\tr(R_p^T\,{\Sigma}_e\,R_p)=\tr(\Sigma_e).
	$
	Together, we obtain that
	$$
	\|\dev_3 \overset{\circ}{\Sigma}\|=\|\dev_3 {\Sigma}_e\|.
	$$
	
	In conclusion, for isotropic elastic materials we have the equivalence of the elastic domains
	\begin{align}\label{egalitate}
		\overset{\circ}{\mathcal{E}}_e(\overset{\circ}{\Sigma},\frac{1}{3}{\boldsymbol{\sigma}}_{\!\mathbf{y}}^2)=\mathcal{E}_{\rm e}(\Sigma_e,\frac{1}{3}{\boldsymbol{\sigma}}_{\!\mathbf{y}}^2).
	\end{align}
	
	\newpage
	Therefore, the flow rule \eqref{Mielkeflow} proposed by Grandi and Stefanelli \cite{GrandiStefanelli} has the following properties:
	\begin{itemize}
		\item[i)] it is thermodynamically correct;
		\item[ii)] from this flow rule it follows ${C}_p(t)\in {\rm Sym}(3)$ and $\det {C}_p(t)=1$. Hence, it follows that ${C}_p(t)\in {\rm PSym}(3)$;
		\item[iii)] the elastic domain $\overset{\circ}{\mathcal{E}}_e$ is convex w.r.t. $\overset{\circ}{\Sigma}$;
		\item[iv)] it is  { an associated plasticity model} in the sense of Definition \ref{definitionpld};
		\item[v)] it preserves ellipticity in elastic loading if the energy is elliptic throughout the domain $\overset{\circ}{\mathcal{E}}_e$ which makes it useful in association with the exponentiated Hencky energy $W_{_{\rm eH}}$ \cite{NeffGhibaLankeit,NeffGhibaPoly,GhibaNeffSilhavy,NeffGhibaPlasticity,NeffGhibaAdd,montella2015exponentiated,GhibaNeffMartin}.
	\end{itemize}
	
	We finish this section by comparing the Helm 2001 model and the Lion 1997 flow rule with the Grandi-Stefanelli 2015 model.

	\begin{proposition}\label{propositionegLion}
		In the isotropic case, the Lion 1997 flow rule (i.e. the Dettmer-Reese 2004 model \cite{dettmer2004theoretical}) is equivalent with the Grandi-Stefanelli 2015 flow rule.
	\end{proposition}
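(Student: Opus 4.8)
The plan is to turn the Lion flow rule \eqref{choicchi00} into the Grandi--Stefanelli flow rule \eqref{Mielkeflow} by a reversible change of variables, the key tool being the right polar decomposition $F_p = R_p\,U_p = R_p\,\sqrt{C_p}$ with $R_p\in\SO(3)$. First I would record the elementary identities $\sqrt{C_p}\,F_p^{-1} = R_p^T$ and $F_p^{-T}\,\sqrt{C_p} = R_p$, which follow from $F_p^{-1} = \sqrt{C_p}^{\,-1}R_p^T$ and $F_p^{-T} = R_p\,\sqrt{C_p}^{\,-1}$. Multiplying \eqref{choicchi00} on the left and right by $\sqrt{C_p}$ then produces
\begin{align*}
\sqrt{C_p}\,\frac{\rm d}{\rm dt}[C_p^{-1}]\,\sqrt{C_p}\ \in\ -\,R_p^T\,\Partial\Chi(\dev_3\Sigma_{e})\,R_p ,
\end{align*}
so the left-hand side is already in the format demanded by \eqref{Mielkeflow}.

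The second step is to identify $-R_p^T\,\Partial\Chi(\dev_3\Sigma_{e})\,R_p$ with $-\Partial_{\overset{\circ}{\Sigma}}\Chi(\dev_3\overset{\circ}{\Sigma})$ in the isotropic case. I would use that isotropy gives $\Sigma_e\in\Sym(3)$ by \eqref{isoplastalpha} and $\overset{\circ}{\Sigma} = R_p^T\,\Sigma_e\,R_p$ by \eqref{sesois}. Since $X\mapsto R_p^T X R_p$ is a linear isometry of $(\Sym(3),\langle\cdot,\cdot\rangle)$ which commutes with $\dev_3$ (the trace being conjugation-invariant), one has $\dev_3\overset{\circ}{\Sigma} = R_p^T(\dev_3\Sigma_e)R_p$ and $\|\dev_3\overset{\circ}{\Sigma}\| = \|\dev_3\Sigma_e\|$, hence this isometry carries the elastic domain written in $\Sigma_e$ onto the one written in $\overset{\circ}{\Sigma}$, which is exactly the content of \eqref{egalitate}. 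A linear isometry sends normal cones to normal cones; concretely, with the normal-cone description recalled in Section \ref{simosection}, conjugating by $R_p$ sends $\{0\}$ to $\{0\}$ in the interior and $\{\lambda^+_{\rm p}\,\dev_3\Sigma_e/\|\dev_3\Sigma_e\|\mid\lambda^+_{\rm p}\geq 0\}$ to $\{\lambda^+_{\rm p}\,\dev_3\overset{\circ}{\Sigma}/\|\dev_3\overset{\circ}{\Sigma}\|\mid\lambda^+_{\rm p}\geq 0\}$ on the yield surface. Substituting this back gives precisely \eqref{Mielkeflow}. Because conjugation by $\sqrt{C_p}$ and by $R_p$ are invertible, all steps are equivalences, so running the argument backwards recovers \eqref{choicchi00} from \eqref{Mielkeflow}; note also that any positive scalar in front of the subdifferential is harmless, $\Partial\Chi$ being a cone.

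I expect the delicate point to be precisely the second step, namely the place where isotropy is indispensable: it is isotropy that forces $\Sigma_e\in\Sym(3)$ and, equally importantly, makes the yield function enter $\Partial\Chi$ only through the $\OO(3)$-invariant quantity $\|\dev_3\Sigma_e\|$, so that the plastic rotation $R_p$ can be absorbed without altering the elastic domain. In the anisotropic situation neither of these holds, the conjugation by $R_p$ no longer returns an identical flow rule, and the equivalence genuinely fails (cf. Remark \ref{remarkcompar}); a careful write-up therefore has to pinpoint where isotropy is used. Everything else — the polar-decomposition identities, the passage between $\frac{\rm d}{\rm dt}[C_p^{-1}]$ and $\sqrt{C_p}\,\frac{\rm d}{\rm dt}[C_p^{-1}]\,\sqrt{C_p}$, and the bookkeeping of $\lambda^+_{\rm p}$ — is routine.
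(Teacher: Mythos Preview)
Your proposal is correct and follows essentially the same approach as the paper: both arguments use the polar decomposition $F_p=R_p\sqrt{C_p}$ to conjugate the Lion flow rule by $\sqrt{C_p}$, then invoke the isotropic identity $\overset{\circ}{\Sigma}=R_p^T\Sigma_e R_p$ from \eqref{sesois} together with the invariance of $\|\dev_3(\cdot)\|$ under orthogonal conjugation (i.e.\ \eqref{egalitate}) to identify the resulting right-hand side with $-\Partial_{\overset{\circ}{\Sigma}}\Chi(\dev_3\overset{\circ}{\Sigma})$. Your write-up is slightly more abstract (normal-cone language throughout rather than the explicit multiplier form), but the content is the same.
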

	\begin{proof}
		We recall that the flow rule of the  Lion 1997 model is
		\begin{align}\label{choicchi000}
			\frac{\rm d}{{\rm d t}}[C_p^{-1}]=-\lambda^+_{\rm p}\,F_p^{-1}\, \frac{\dev_3\Sigma_{e}}{\|\dev_3\Sigma_{e}\|}\,F_p^{-T}, \quad \lambda^+_{\rm p}\in \R_+, \
		\end{align}
		for ${\Sigma}_e\not\in {\rm int}(\mathcal{E}_{\rm e}({\Sigma_{e}},\frac{1}{3}{\boldsymbol{\sigma}}_{\!\mathbf{y}}^2))$.
		Since $F_p=R_p\, \sqrt{C_p}$ and  in the isotropic case ${\Sigma}_e=R_p\,\overset{\circ}{\Sigma}\,R_p^T$, using \eqref{egalitate} we rewrite the Lion's flow rule in the form
		\begin{align*}
			\frac{\rm d}{{\rm d t}}[C_p^{-1}]=-\lambda^+_{\rm p}\,\sqrt{C_p}^{-1}\,R_p^T \frac{\dev_3(R_p\,\overset{\circ}{\Sigma}\,R_p^T)}{\|\dev_3(R_p\,\overset{\circ}{\Sigma}\,R_p^T)\|}\,R_p\,\sqrt{C_p}^{-1}, \qquad \lambda^+_{\rm p}\in \R_+,\notag
		\end{align*}
		for $R_p\,\overset{\circ}{\Sigma}\,R_p^T\not\in {\rm int}(\overset{\circ}{\mathcal{E}}_e(\overset{\circ}{\Sigma},\frac{1}{3}{\boldsymbol{\sigma}}_{\!\mathbf{y}}^2))$,
		which is equivalent with
		\begin{align}\label{choicchi00sigma}
			\sqrt{C_p}\frac{\rm d}{{\rm d t}}[C_p^{-1}]\sqrt{C_p}=-\lambda^+_{\rm p}\, \frac{\dev_3\overset{\circ}{\Sigma}}{\|\dev_3\overset{\circ}{\Sigma}\|}, \qquad \lambda^+_{\rm p}\in \R_+,
		\end{align}
		for $R_p\,\overset{\circ}{\Sigma}\,R_p^T\not\in {\rm int}(\overset{\circ}{\mathcal{E}}_e(\overset{\circ}{\Sigma},\frac{1}{3}{\boldsymbol{\sigma}}_{\!\mathbf{y}}^2))$.
		Moreover, $R_p\,\overset{\circ}{\Sigma}\,R_p^T\in {\rm int}(\overset{\circ}{\mathcal{E}}_e(\overset{\circ}{\Sigma},\frac{1}{3}{\boldsymbol{\sigma}}_{\!\mathbf{y}}^2))\ \Leftrightarrow\
		\overset{\circ}{\Sigma}\in {\rm int}(\overset{\circ}{\mathcal{E}}_e(\overset{\circ}{\Sigma},\frac{1}{3}{\boldsymbol{\sigma}}_{\!\mathbf{y}}^2))$. Therefore, the flow rule \eqref{choicchi00sigma} becomes
		\begin{align}\label{choicchi000sigma}
			\sqrt{C_p}\frac{\rm d}{{\rm d t}}[C_p^{-1}]\sqrt{C_p}\in -\Partial_{\overset{\circ}{\Sigma}} \Chi(\dev_3\overset{\circ}{\Sigma}),
		\end{align}
		which coincides with the Grandi-Stefanelli 2015 flow rule \eqref{Mielkeflow}.
	\end{proof}
	
	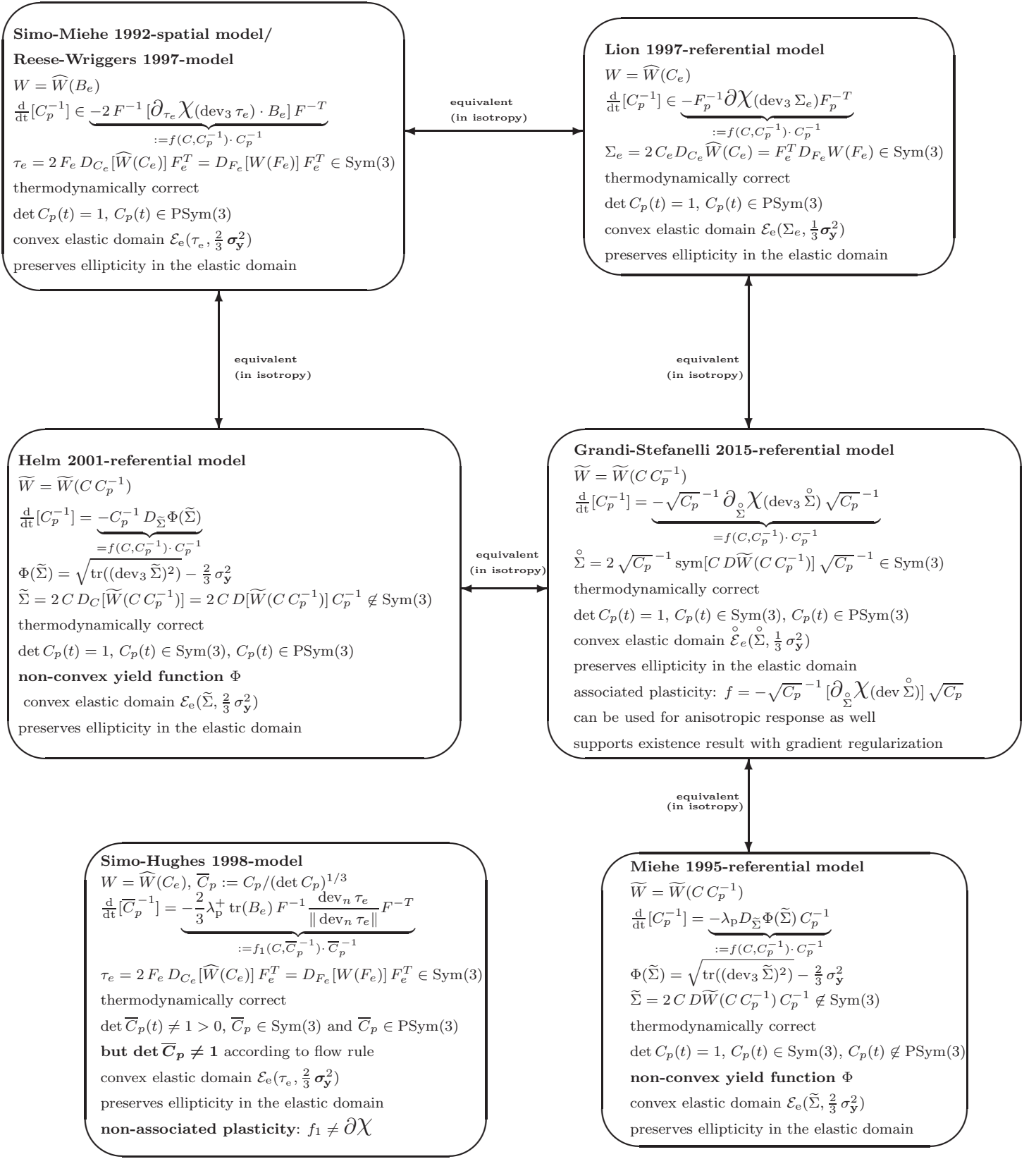
\begin{figure}
		\setlength{\unitlength}{0.97mm}
		\begin{center}
			\begin{picture}(10,205)
			\thicklines
			
			\put(52,192){\oval(72,50)}
			\put(20,210){\footnotesize{\bf Lion 1997-referential model}}
			\put(20,205){\footnotesize{${W}=\widehat{W}(C_e)$}}
			\put(20,200){\footnotesize{$\frac{\rm d}{{\rm d t}}[C_p^{-1}]\in\underbrace{- F_p^{-1}\Partial\Chi (\dev_3\Sigma_{e})F_p^{-T}}_{:=f(C,C_p^{-1})\cdot\, C_p^{-1}}$}}
			\put(20,190){\footnotesize{$\Sigma_{e}=2\,C_e D_{C_e} \widehat{W}(C_e)=F_e^T D_{F_e} {W}(F_e)\in{\rm Sym}(3)$}}
			\put(20,185){\footnotesize{thermodynamically correct}}
			\put(20,180){\footnotesize{$\det C_p(t)=1$, $C_p(t)\in{\rm PSym}(3)$}}
			\put(20,175){\footnotesize{convex elastic  domain $\mathcal{E}_{\rm e}({\Sigma_{e}},\frac{1}{3}{\boldsymbol{\sigma}}_{\!\mathbf{y}}^2)$}}
			\put(20,170){\footnotesize{preserves ellipticity in the elastic domain}}
			\put(20,170){\footnotesize{}}
			\put(23,165){\footnotesize{\bf }}
			\put(-58,192){\oval(78,56)}
			\put(-95,213){\footnotesize{\bf Simo-Miehe 1992-spatial model/}}
			\put(-95,208){\footnotesize{\bf Reese-Wriggers 1997-model}}
			\put(-95,203){\footnotesize{${W}=\widehat{W}(B_e)$}}
			\put(-95,198){\footnotesize{$\frac{\rm d}{\rm dt}[C_p^{-1}]\in \underbrace{-2 \, F^{-1}\,[\Partial_{\tau_e} \Chi(\dev_3\tau_e)\cdot B_e]\, F^{-T}}_{:={f}(C,C_p^{-1})\cdot\, C_p^{-1}}$}}
			\put(-95,188){\footnotesize{$\tau_e =2\, F_e \, D_{C_e}[\widehat{W}(C_e)]\, F_e^T=D_{F_e}[{W}(F_e)]\, F_e^T\in{\rm Sym}(3)$}}
			\put(-95,183){\footnotesize{thermodynamically correct}}
			\put(-95,178){\footnotesize{$\det C_p(t)=1$, $C_p(t)\in{\rm PSym}(3)$}}
			\put(-95,173){\footnotesize{convex elastic domain $\mathcal{E}_{\rm e}(\tau_{_{\rm e}},\frac{2}{3}\, {\boldsymbol{\sigma}}_{\!\mathbf{y}}^2)$}}
			\put(-95,168){\footnotesize{preserves ellipticity in the elastic domain}}
			\put(-95,170){\footnotesize{}}
			\put(-95,165){\footnotesize{\bf }}
			\put(-19,195){\vector(1,0){35}}
			\put(14,195){\vector(-1,0){33}}
			\put(-10,200){\tiny{\bf equivalent}}
			\put(-10,197){\tiny{\bf (in isotropy)}}

			\put(-52,105){\oval(88,64)}
			\put(-94,130){\footnotesize{\bf  Helm 2001-referential model}}
			\put(-94,125){\footnotesize{$\widetilde{W}=\widetilde{W}(C\, C_p^{-1})$}}
			\put(-94,119){\footnotesize{$\frac{\rm d}{\rm dt}[ C_p^{-1}]=\underbrace{-C_p^{-1}\,D_{\widetilde{\Sigma}} \Phi(\widetilde{\Sigma})}_{=f(C,C^{-1}_p)\cdot \,C_p^{-1}}$}}
			\put(-94,108){\footnotesize{$\Phi(\widetilde{\Sigma})=\sqrt{\tr( (\dev_3 \widetilde{\Sigma})^2)}-{\frac{2}{3}}\, \sigma_{\textbf{y}}^2$}}
			\put(-94,103){\footnotesize{$\widetilde{\Sigma}=2\,C\, D_C[\widetilde{W}(C\, C_p^{-1})]=2\,C\, D [\widetilde{W}(C\, C_p^{-1})]\, C_p^{-1}\not\in{\rm Sym}(3)$}}
			\put(-94,98){\footnotesize{thermodynamically correct}}
			\put(-94,93){\footnotesize{$\det C_p(t)=1$, ${C_p(t)\in{\rm Sym}(3)}$, ${C_p(t)\in{\rm PSym}(3)}$}}
			\put(-94,88){\footnotesize{{\bf non-convex  yield function $\Phi$}}}
			\put(-94,83){\footnotesize{{  convex elastic domain $\mathcal{E}_{\rm e}(\widetilde{\Sigma},{\frac{2}{3}}\, \sigma_{\textbf{y}}^2)$}}}
			\put(-94,78){\footnotesize{preserves ellipticity in the elastic domain}}
			\put(-94,63){\footnotesize{}}
			
			\put(-6,106){\vector(1,0){15}}
			\put(5,106){\vector(-1,0){13}}
			\put(-5,112){\bf \tiny{equivalent}}
			\put(-6.3,109){\bf\tiny{(in isotropy)}}
			
			\put(-55,137){\vector(0,1){27}}
			\put(-55,142){\vector(0,-1){5}}
			\put(-52,150){\bf \tiny{equivalent}}
			\put(-52,147){\bf \tiny{(in isotropy)}}
			
			\put(48,138){\vector(0,1){29}}
			\put(48,164){\vector(0,-1){27}}
			\put(34,150){\bf \tiny{equivalent}}
			\put(32,147){\bf \tiny{(in isotropy)}}
			
			\put(55,105){\oval(92,64)}
			\put(14,132){\footnotesize{\bf  Grandi-Stefanelli 2015-referential model}}
			\put(14,127){\footnotesize{$\widetilde{W}=\widetilde{W}(C\, C_p^{-1})$}}
			\put(14,122){\footnotesize{$\frac{\rm d}{\rm dt}[ C_p^{-1}]=\underbrace{- \sqrt{C_p}^{\, -1}\,\Partial_{\overset{\circ}{\Sigma}} \Chi(\dev_3\overset{\circ}{\Sigma})\,\sqrt{C_p}^{\, -1}}_{=f(C,C^{-1}_p)\cdot\, C_p^{-1}}$}}
			\put(14,110){\footnotesize{$\overset{\circ}{\Sigma}=2\,\sqrt{C_p}^{\,-1}\,{\rm sym}[C\, D\widetilde{W}(C\, C_p^{-1})]\,\sqrt{C_p}^{\,-1}\in{\rm Sym}(3)$}}
			\put(14,105){\footnotesize{thermodynamically correct}}
			\put(14,100){\footnotesize{$\det C_p(t)=1$, ${C_p(t)\in{\rm Sym}(3)}$, ${C_p(t)\in{\rm PSym}(3)}$}}
			\put(14,95){\footnotesize{convex elastic domain $\overset{\circ}{\mathcal{E}}_e(\overset{\circ}{\Sigma},{\frac{1}{3}}\, \sigma_{\textbf{y}}^2)$}}
			\put(14,90){\footnotesize{preserves ellipticity in the elastic domain}}
			\put(14,85){\footnotesize{associated plasticity:  $f= - \sqrt{C_p}^{\, -1}\,[\Partial_{\overset{\circ}{\Sigma}} \Chi(\dev \overset{\circ}{\Sigma})]\,\sqrt{C_p}$}}
			\put(14,80){\footnotesize{can be used for anisotropic response as well}}
			\put(14,75){\footnotesize{supports existence result with gradient regularization}}
			
			\put(-42,26){\oval(78,61)}
			\put(-78,52){\footnotesize{\bf Simo-Hughes 1998-model }}
			\put(-78,48){\footnotesize{$W=\widehat{W}(C_e)$,\  $\overline{C}_p:=C_p/(\det C_p)^{1/3}$}}
			\put(-78,43){\footnotesize{$\frac{\rm d }{{\rm d t}}[\overline{C}_p^{\,-1}]=\underbrace{-\frac{2}{3}\lambda^+_{\rm p}\,\tr(B_e)\,F^{-1}\frac{\dev_n\tau_e}{\|\dev_n\tau_e\|}F^{-T}}_{:=f_1(C,\overline{C}_p^{\,-1})\cdot\, \overline{C}_p^{\,-1}}$}}
			\put(-78,30){\footnotesize{$\tau_e =2\, F_e \, D_{C_e}[\widehat{W}(C_e)]\, F_e^T=D_{F_e}[{W}(F_e)]\, F_e^T\in{\rm Sym}(3)$}}
			\put(-78,25){\footnotesize{thermodynamically correct}}
			\put(-78,20){\footnotesize{$\det \overline{C}_p(t)\neq 1>0$,  $\overline{C}_p\in{\rm Sym}(3)$ and ${\overline{C}_p\in{\rm PSym}(3)}$ }}
			\put(-78,15){\footnotesize{{\bf but} $\boldsymbol{\det \overline{C}_p\neq 1}$ according to flow rule}}
			\put(-78,10){\footnotesize{convex elastic domain $\mathcal{E}_{\rm e}(\tau_{_{\rm e}},\frac{2}{3}\, {\boldsymbol{\sigma}}_{\!\mathbf{y}}^2)$}}
			\put(-78,5){\footnotesize{preserves ellipticity in the elastic domain}}
			\put(-78,0){\footnotesize{\bf  non-associated plasticity}: ${f_1\neq \Partial \Chi}$}
			
			\put(48,54.5){\vector(0,1){18.5}}
			\put(48,72){\vector(0,-1){17.5}}
			\put(34,65){\bf \tiny{equivalent}}
			\put(32,63){\bf \tiny{(in isotropy)}}
			
			\put(55,26){\oval(71,57)}
			\put(25,51){\footnotesize{\bf Miehe 1995-referential model}}
			\put(25,46){\footnotesize{$\widetilde{W}=\widetilde{W}(C\, C_p^{-1})$}}
			\put(25,41){\footnotesize{$\frac{\rm d}{\rm dt} [C_p^{-1}]=\underbrace{-\lambda_{\rm p}D_{\widetilde{\Sigma}}\Phi(\widetilde{\Sigma})\, C_p^{-1}}_{:={f}(C,C_p^{-1})\cdot\, C_p^{-1}}$}}
			\put(25,30){\footnotesize{$\Phi(\widetilde{\Sigma})=\sqrt{\tr( (\dev_3 \widetilde{\Sigma})^2)}-{\frac{2}{3}}\, \sigma_{\textbf{y}}^2$}}
			\put(25,25){\footnotesize{$\widetilde{\Sigma}= 2\,C\,D\widetilde{W}(C\,C_p^{-1})\, C_p^{-1}\not\in{\rm Sym}(3)$}}
			\put(25,20){\footnotesize{thermodynamically correct}}
			\put(25,15){\footnotesize{$\det C_p(t)=1$,  ${C_p(t)\in{\rm Sym}(3)}$, ${C_p(t)\not\in{\rm PSym}(3)}$}}
			\put(25,10){\footnotesize{\bf non-convex yield function $\Phi$}}
			\put(25,5){\footnotesize{convex elastic domain $\mathcal{E}_{\rm e}(\widetilde{\Sigma},{\frac{2}{3}}\, \sigma_{\textbf{y}}^2)$}}
			\put(25,0){\footnotesize{preserves ellipticity in the elastic domain}}
			\put(25,0){\footnotesize{}}
			\end{picture}
		\end{center}
		\caption{Idealized, isotropic perfect plasticity models involving a 6-dimensional flow rule for $C_p$   w.r.t. the reference configuration are considered. By definition, the trajectory for the plastic metric $C_p(t)$ should remain in ${\rm PSym}(3)$. $\lambda_{\rm p}^+$ is the plastic multiplier. We have recast all flow rules in the format $\frac{\rm d}{\rm dt}[P^{-1}]\, P\in  -{\PartialCaption\ChiCaption}$ or $ \sqrt{P}\,\frac{\rm d}{\rm dt}[P^{-1}]\, \sqrt{P}\in  -{\PartialCaption\ChiCaption} $.}\label{plastmodeldiagram}
	\end{figure}

	\begin{proposition}\label{prophelm}
		In the isotropic case, the Helm  2001 flow rule  is equivalent with the Grandi-Stefanelli 2015 flow rule, i.e. it is also equivalent with the Lion 1997 flow rule and the Dettmer-Reese 2004 model.
	\end{proposition}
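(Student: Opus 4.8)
The plan is to recast the Helm 2001 flow rule \eqref{SHflow} into the subdifferential format \eqref{Mielkeflow} of the Grandi--Stefanelli model by a $\sqrt{C_p}$-conjugation, and then to close the loop with Propositions \ref{propositionegLion} and \ref{eqlionsimo}. First I would start from the form \eqref{echimiehehelm} of \eqref{SHflow}, namely
\begin{align*}
	\frac{\rm d}{\rm dt}[C_p^{-1}]=-\frac{\lambda_{\rm p}^+}{\sqrt{\tr[(\dev_3 \widetilde{\Sigma})^2]}}\,C_p^{-1}\,\dev_3 \widetilde{\Sigma},
\end{align*}
with the KKT conditions on $\lambda_{\rm p}^+$ attached to the yield function $\Phi(\widetilde{\Sigma})=\sqrt{\tr[(\dev_3\widetilde{\Sigma})^2]}-\sqrt{\frac{2}{3}}\,\sigma_{\textbf{y}}$.

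The key observation is that, in the isotropic case, $\overset{\circ}{\Sigma}$ and $\widetilde{\Sigma}$ are conjugate through $\sqrt{C_p}$. Indeed, from \eqref{definitiesigma} one has $C\,D[\widetilde{W}(C\,C_p^{-1})]=\frac{1}{2}\,\widetilde{\Sigma}\,C_p$, and since $\widetilde{\Sigma}\,C_p\in{\rm Sym}(3)$ by \eqref{simetrictild0} (this is exactly where isotropy, i.e. symmetry of $\Sigma_e$, enters), the definition of $\overset{\circ}{\Sigma}$ gives
\begin{align*}
	\overset{\circ}{\Sigma}=\sqrt{C_p}^{\,-1}\,{\rm sym}[\widetilde{\Sigma}\,C_p]\,\sqrt{C_p}^{\,-1}=\sqrt{C_p}^{\,-1}\,\widetilde{\Sigma}\,C_p\,\sqrt{C_p}^{\,-1}=\sqrt{C_p}^{\,-1}\,\widetilde{\Sigma}\,\sqrt{C_p},
\end{align*}
equivalently $\widetilde{\Sigma}=\sqrt{C_p}\,\overset{\circ}{\Sigma}\,\sqrt{C_p}^{\,-1}$ (the same identity also follows from Lemma \ref{lemmasigmatild} together with $F_p=R_p\sqrt{C_p}$ and $\overset{\circ}{\Sigma}=R_p^T\Sigma_e R_p$). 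Since conjugation leaves the trace invariant, $\dev_3\widetilde{\Sigma}=\sqrt{C_p}\,(\dev_3\overset{\circ}{\Sigma})\,\sqrt{C_p}^{\,-1}$, hence $\tr[(\dev_3\widetilde{\Sigma})^2]=\tr[(\dev_3\overset{\circ}{\Sigma})^2]=\|\dev_3\overset{\circ}{\Sigma}\|^2$, the last step because $\overset{\circ}{\Sigma}\in{\rm Sym}(3)$ — this is precisely the content of \eqref{ShuIhl00}--\eqref{ShuIhl220} and \eqref{egalitate}.

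Then I would conjugate the first displayed flow rule by $\sqrt{C_p}$: multiplying on both sides and using $\sqrt{C_p}\,C_p^{-1}=\sqrt{C_p}^{\,-1}$ together with the identity above, all factors of $\sqrt{C_p}$ cancel and one obtains
\begin{align*}
	\sqrt{C_p}\,\frac{\rm d}{\rm dt}[C_p^{-1}]\,\sqrt{C_p}=-\lambda_{\rm p}^+\,\frac{\dev_3\overset{\circ}{\Sigma}}{\|\dev_3\overset{\circ}{\Sigma}\|},
\end{align*}
which is the subdifferential form \eqref{choicchi000sigma}$=$\eqref{Mielkeflow} of the Grandi--Stefanelli model; moreover, since $\sqrt{\tr[(\dev_3\widetilde{\Sigma})^2]}=\|\dev_3\overset{\circ}{\Sigma}\|$, the admissible set $\{\Phi(\widetilde{\Sigma})\le 0\}$ is exactly the convex ball $\overset{\circ}{\mathcal{E}}_e(\overset{\circ}{\Sigma},\cdot)$ of \eqref{Mielkeflow} (up to the normalization of $\sigma_{\textbf{y}}$), so the KKT constraints on $\lambda_{\rm p}^+$ match as well; in particular the apparent non-convexity of $\mathcal{E}_{\rm e}(\widetilde{\Sigma},\cdot)$ in $\widetilde{\Sigma}$ from Remark \ref{remarkconvexityMi} evaporates after passing to $\overset{\circ}{\Sigma}$. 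This establishes the equivalence with the Grandi--Stefanelli 2015 flow rule, and chaining with Proposition \ref{propositionegLion} (Grandi--Stefanelli $\Leftrightarrow$ Lion 1997 $=$ Dettmer--Reese 2004) and, if wanted, Proposition \ref{eqlionsimo} yields the remaining equivalences. The only genuinely delicate point is the conjugation identity $\widetilde{\Sigma}=\sqrt{C_p}\,\overset{\circ}{\Sigma}\,\sqrt{C_p}^{\,-1}$ and recognizing that isotropy — through symmetry of $\Sigma_e$, i.e. of $\widetilde{\Sigma}\,C_p$ — is exactly what makes it hold; once it is in place, the rest is substitution.
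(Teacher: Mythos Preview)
Your proof is correct and follows essentially the same route as the paper's: both establish the conjugation identity $\widetilde{\Sigma}=\sqrt{C_p}\,\overset{\circ}{\Sigma}\,\sqrt{C_p}^{\,-1}$ via the symmetry of $\widetilde{\Sigma}\,C_p$ (isotropy), deduce $\dev_3\widetilde{\Sigma}=\sqrt{C_p}\,(\dev_3\overset{\circ}{\Sigma})\,\sqrt{C_p}^{\,-1}$ and $\tr[(\dev_3\widetilde{\Sigma})^2]=\|\dev_3\overset{\circ}{\Sigma}\|^2$, and then conjugate the flow rule by $\sqrt{C_p}$ to recover \eqref{Mielkeflow}. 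The only cosmetic difference is that the paper starts from the $\frac{\rm d}{\rm dt}[C_p]$-form \eqref{SHflow} while you start from the equivalent $\frac{\rm d}{\rm dt}[C_p^{-1}]$-form \eqref{echimiehehelm}.
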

	\begin{proof}
		We  have
		\begin{align*}
			\overset{\circ}{\Sigma}&={\rm sym}(\sqrt{C_p}^{\,-1}\, \widetilde{\Sigma}\,\sqrt{C_p})={\rm sym}(\sqrt{C_p}^{\,-1}\, \widetilde{\Sigma}\,C_p\, C_p^{-1}\,\sqrt{C_p})\notag\\&={\rm sym}(\sqrt{C_p}^{\,-1}\, (\widetilde{\Sigma}\,C_p)\sqrt{C_p}^{\,-1}),
		\end{align*}
		and we recall that for isotropic materials  $$\widetilde{\Sigma}\,C_p=F_p^{T} \Sigma_e\,F_p^{-T}\,C_p= F_p^{T} \Sigma_e\,F_p\in {\rm Sym}(3)$$ holds. Hence, for isotropic materials
		\begin{align*}
			\overset{\circ}{\Sigma}&=\sqrt{C_p}^{\,-1}\, (\widetilde{\Sigma}\,C_p)\sqrt{C_p}^{\,-1}=\sqrt{C_p}^{\,-1}\, \widetilde{\Sigma}\,\sqrt{C_p}, \qquad \, \widetilde{\Sigma}=\sqrt{C_p}\,\overset{\circ}{\Sigma}\, \sqrt{C_p}^{\,-1}\,.
		\end{align*}
		Using the above identity, we may rewrite the Helm 2001-flow rule \eqref{SHflow} in the form
		\begin{align}\label{SHflows0}
			\frac{\rm d}{\rm dt}[ C_p]=\lambda_{\rm p}^+\,\frac{\dev_3 (\sqrt{C_p}\,\overset{\circ}{\Sigma}\, \sqrt{C_p}^{\,-1})}{\sqrt{\tr( (\dev_3 (\sqrt{C_p}\,\overset{\circ}{\Sigma}\, \sqrt{C_p}^{\,-1}))^2)}}\cdot C_p\,.
		\end{align}
		We also have
		\begin{align*}
			\tr(\sqrt{C_p}\,\overset{\circ}{\Sigma}\, \sqrt{C_p}^{\,-1})&=\tr(\overset{\circ}{\Sigma}),\notag\\ \dev_3 (\sqrt{C_p}\,\overset{\circ}{\Sigma}\, \sqrt{C_p}^{\,-1})&=\sqrt{C_p}\,(\dev_3 \overset{\circ}{\Sigma})\, \sqrt{C_p}^{\,-1},\\
			\tr( [\dev_3 (\sqrt{C_p}\,\overset{\circ}{\Sigma}\, \sqrt{C_p}^{\,-1})]^2)&=\tr( [\sqrt{C_p}\,(\dev_3 \overset{\circ}{\Sigma})\, \sqrt{C_p}^{\,-1}]^2)=
			\tr( \sqrt{C_p}\,(\dev_3 \overset{\circ}{\Sigma})^2\, \sqrt{C_p}^{\,-1})\notag\\&=\tr((\dev_3 \overset{\circ}{\Sigma})^2)=\langle(\dev_3 \overset{\circ}{\Sigma})^2,\id\rangle\notag\\&=\langle\dev_3 \overset{\circ}{\Sigma},\dev_3 \overset{\circ}{\Sigma}\rangle=\|\dev_3 \overset{\circ}{\Sigma}\|^2.\notag
		\end{align*}
		Hence, Helm's flow rule \eqref{SHflow} is equivalent with
		\begin{align}\label{SHflows0}
			\frac{\rm d}{\rm dt}[ C_p]&=\lambda_{\rm p}^+\,\sqrt{C_p}\,\frac{\dev_3 \overset{\circ}{\Sigma}}{\|\dev_3 \overset{\circ}{\Sigma}\|}\, \sqrt{C_p}\,\notag\\&
			\qquad \Leftrightarrow\qquad
			\sqrt{C_p}\,\frac{\rm d}{\rm dt}[ C_p^{-1}]\, \sqrt{C_p}=-\lambda_{\rm p}^+\,\frac{\dev_3 \overset{\circ}{\Sigma}}{\|\dev_3 \overset{\circ}{\Sigma}\|}\,,
		\end{align}
		and the proof is complete.
	\end{proof}
	\begin{remark}
		The equivalence is true for an isotropic formulation only. However, the Grandi-Stefanelli model will provide a consistent flow-rule for a plastic metric also in the anisotropic case.
	\end{remark}

	An existence proof for the energetic formulation \cite{frigeri2012existence} of the model given by Grandi and Stefanelli \cite{GrandiStefanelli} together  with a full plastic strain regularization can be given along the lines of Mielke's energetic approach \cite{mielke2003energetic,mielke2004SIAM,mainik2005existence,mielke2006ZAMM,mainik2009global}.
	
	\section{Summary}
	\begin{figure}
		\setlength{\unitlength}{0.97mm}
		\begin{center}
			\begin{picture}(10,45)
			\thicklines
			\put(55,28){\oval(73,43)}
			\put(22,45){\footnotesize{\bf 9-dimensional flow rule}}
			\put(22,40){\footnotesize{$W=\widehat{W}(C_e)=\widetilde{W}(F\, F_p^{-1})$}}
			\put(22,35){\footnotesize{$-F_p\,\frac{\rm d}{{\rm d t}}[F_p^{-1}]\in\Partial \Chi(\dev_3 \Sigma_{e})$}}
			\put(22,30){\footnotesize{$\Sigma_{e}=2\,C_e D_{C_e} \widehat{W}(C_e)=F_e^T D_{F_e} {W}(F_e)\in{\rm Sym}(3)$}}
			\put(22,25){\footnotesize{thermodynamically correct, $\det F_p=1$}}
			\put(22,20){\footnotesize{convex elastic domain $\mathcal{E}_{\rm e}({\Sigma_{e}},\frac{1}{3}{\boldsymbol{\sigma}}_{\!\mathbf{y}}^2)$}}
			\put(22,15){\footnotesize{preserves ellipticity in the elastic domain}}
			\put(22,10){\footnotesize{associated plasticity: $f= \Partial \Chi$}}
			
			\put(-13,25){\vector(1,0){32}}
			\put(15,25){\vector(-1,0){32}}
			\put(-4,29){\bf \tiny{similar}}
			\put(-5,27){\bf \tiny{structure}}
			
			\put(-50,26){\oval(67,59)}
			\put(-80,51){\footnotesize{\bf Another referential model}}
			\put(-80,46){\footnotesize{$\widetilde{W}=\widetilde{W}(C\, C_p^{-1})$}}
			\put(-80,41){\footnotesize{$\frac{\rm d}{\rm dt} [C_p^{-1}]\in\underbrace{-\,\Partial_{\widetilde{\Sigma}}\widetilde{\Chi}(\dev_3\widetilde{\Sigma})\, C_p^{-1}}_{:={f}_2(C,C_p^{-1})\cdot\, C_p^{-1}}$}}
			\put(-80,30){\footnotesize{$\widetilde{\Sigma}= 2\,C\,D\widetilde{W}(C\,C_p^{-1})\, C_p^{-1}\not\in{\rm Sym}(3)$}}
			\put(-80,25){\footnotesize{thermodynamically correct}}
			\put(-80,20){\footnotesize{$\det C_p(t)=1$,  $\boldsymbol{C_p(t)\not\in{\rm Sym}(3)}$}}
			\put(-80,15){\footnotesize{$\boldsymbol{C_p(t)\not\in{\rm PSym}(3)}$}}
			\put(-80,10){\footnotesize{convex elastic domain $\mathcal{E}_{\rm e}(\widetilde{\Sigma},{\frac{2}{3}}\, \sigma_{\textbf{y}}^2)$}}
			\put(-80,5){\footnotesize{preserves ellipticity in the elastic domain}}
			\put(-80,0){\footnotesize{associated plasticity: $f_2= \Partial \widetilde{\Chi}$}}
			\end{picture}
		\end{center}
		\caption{An inconsistent model and a 9-dimensional flow rule for $F_p$. They are associative, since  both flow rules are in the format $ \frac{\rm d}{\rm dt}[P]\, P^{-1}\in  -{\PartialCaption\ChiCaption}$ or $ \frac{\rm d}{\rm dt}[\varepsilon_p]\in {\PartialCaption\ChiCaption}$.}\label{plastmodeldiagram2}
	\end{figure}
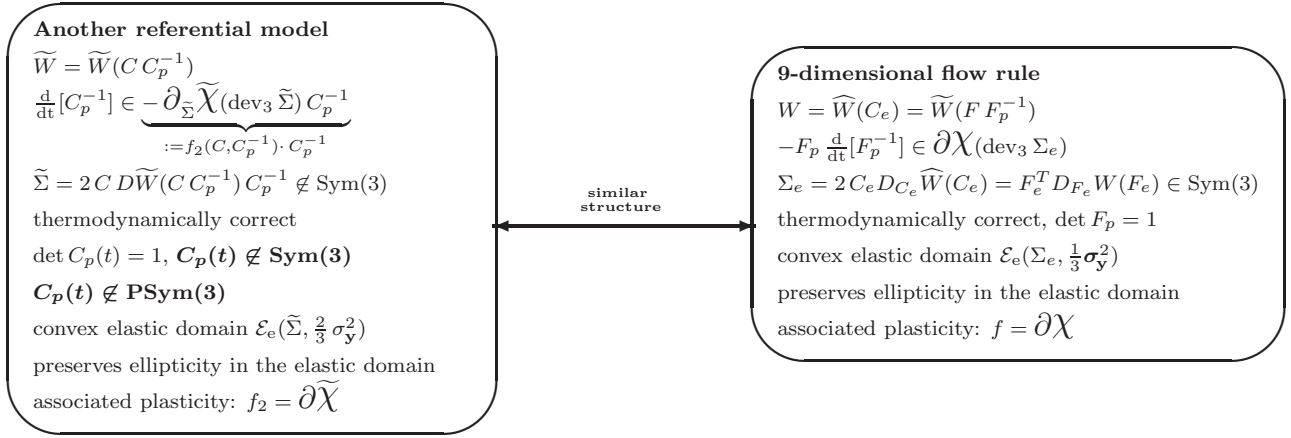
	In isotropic elasto-plasticity it is common knowledge that a reduction to a \textit{6-dimensional flow rule} for a \textit{plastic metric} $C_p$ is in principle possible. We have discussed several existing different models. Not all of them are free of inconsistencies.  This testifies to the fact that setting up a consistent 6-dimensional flow-rule  is not entirely trivial.
	
	One problem which often occurs, is that the flow rule for $C_p$ is written in terms of $F_p$, which however should not appear at all. One finding of our investigation is that, nevertheless, in the isotropic case, all consistent  flow rules can be expressed in $C_p$ alone and are equivalent. The Grandi-Stefanelli model \cite{GrandiStefanelli} has the decisive advantage to be operable also in the anisotropic case.
	In Figure \ref{plastmodeldiagram} and Figure \ref{plastmodeldiagram2} we summarize the investigated isotropic plasticity models and we indicate if the known conditions which make them  consistent are satisfied.

	\section{Acknowledgement}
		We would like to thank Prof. Stefanie Reese (RWTH\! Aachen), Prof. J\"{o}rn Ihlemann (TU\! Chemnitz), Dr.\! Alexey Shutov (TU\! Chemnitz) and Prof. Alexander Mielke (WIAS-Berlin) for in-depth discussion of flow rules in $C_p$.  I.D. Ghiba acknowledges support from the Romanian National Authority for Scientific Research (CNCS-UEFISCDI), Project No. PN-II-ID-PCE-2011-3-0521.

	\bibliographystyle{plain} 
	\addcontentsline{toc}{section}{References}
	
		\appendix 
		\section*{Appendix}\addcontentsline{toc}{section}{Appendix} \addtocontents{toc}{\protect\setcounter{tocdepth}{-1}}
		
		\setcounter{section}{1}
		\subsection*{A.1 The Lion 1997 model for the Neo-Hooke elastic  energy}\setcounter{equation}{0}

		For a quick consistency check we exhibit the consistency of this model directly for a Neo-Hooke elastic energy  and  we  give the concrete expression for the functions $f(C,C_p^{-1})$ and $\widehat{f}(C,C_p^{-1})$. To this end, we consider the energy
		\begin{align*}
		\widehat{W}_{_{\rm NH}}(C_e)&=\mu\, \tr\left(\frac{C_e}{\det C_e^{1/3}}\right)+h(\det C)\overset{\det C_p=1}{=}\mu\, \tr\left(\frac{C_e}{\det C^{1/3}}\right)+h(\det C).\notag
		\end{align*}
		We deduce
		$
		\Sigma_{e}:=2\,C_e\, D_{C_e}[\widehat{W}({C_e})]=2\, C_e \,\mu\, \frac{1}{\det C^{1/3}}\cdot \id=2\,\mu\, \frac{1}{\det C^{1/3}} \cdot C_e.
		$
		Hence, the flow rule \eqref{choicchi00} can be written in the form
		
		\begin{align}\label{choicchiNH}
		\frac{\rm d}{{\rm d t}}[C_p^{-1}]&=-\lambda^+_{\rm p}\,F_p^{-1}\,\frac{\dev C_e}{\|\dev C_e\|}\,F_p^{-T}\notag\\&=-\frac{\lambda^+_{\rm p}}{\|\dev C_e\|} \,\left(F_p^{-1} C_e\,F_p^{-T}-\frac{1}{3}\,\tr(C_e)\cdot F_p^{-1} F_p^{-T}\right)\notag\\
		&=-\frac{\lambda^+_{\rm p}}{\|\dev C_e\|} \,\left(C_p^{-1}\, C\, C_p^{-1}-\frac{1}{3}\,\tr(C_p^{-1}\, C)\cdot C_p^{-1} \right).
		\end{align}
		
		We also deduce
		\begin{align*}
		\|\dev C_e\|^2&=\|C_e\|^2-\frac{1}{3}\, [\tr(C_e)]^2=\|F_p^{-T}\, C\, F_p^{-1}\|^2-\frac{1}{3}\, [\tr(F_p^{-T}\, C\, F_p^{-1})]^2\notag\\&=\langle F_p^{-T}\, C\, F_p^{-1},F_p^{-T}\, C\, F_p^{-1}\rangle-\frac{1}{3}\, \langle F_p^{-T}\, C\, F_p^{-1}),\id\rangle^2\notag\\
		&=\langle C_p^{-1}\, C, C\, C_p^{-1}\rangle-\frac{1}{3}\,[ \tr( C_p^{-1}\, C)]^2=[\tr (C_p^{-1}\, C)^2]-\frac{1}{3}\, [\tr( C_p^{-1}\, C)]^2.
		\end{align*}
		Therefore, we obtain
		\begin{align}\label{choicchiNH1}
		\frac{\rm d}{{\rm d t}}[C_p^{-1}]&
		=-\frac{\lambda^+_{\rm p}}{\sqrt{\tr [(C_p^{-1}\, C)^2]-\frac{1}{3}\, [\tr( C_p^{-1}\, C)]^2}}\,\left(C_p^{-1}\, C\, -\frac{1}{3}\,\tr(C_p^{-1}\, C)\cdot \id \right)\, C_p^{-1}\\
		&
		=-\frac{\lambda^+_{\rm p}}{\sqrt{[\tr (C_p^{-1}\, C)^2]-\frac{1}{3}\, \tr[( C_p^{-1}\, C)]^2}}\,\,C_p^{-1}\,\left( C\, C_p^{-1} -\frac{1}{3}\,\tr( C\, C_p^{-1})\cdot \id \right)\notag.
		\end{align}
		Comparing \eqref{fhat1},  \eqref{choicchiNH1}, \eqref{choicchi0f} and \eqref{explf}, we deduce
		\begin{align*}
		\widehat{f}(C,C_p^{-1})&=C,\notag\\
		{f}(C,C_p^{-1})
		&=\left\{\frac{-\lambda^+_{\rm p}}
		{\sqrt{\tr[(C\,C_p^{-1})^2]-\frac{1}{3}[\tr(C\,C_p^{-1})]^2}}\,
		\dev_3(\,C_p^{-1}
		\, C)\quad \big| \quad \lambda^+_{\rm p}\in \R_+\right\}.\notag
		\end{align*}
		We clearly see that even for this simple energy, we have $\tr[(C\,C_p^{-1})^2]-\frac{1}{3}[\tr(C\,C_p^{-1})]^2\neq \|\dev_3(C\,C_p^{-1})\|^2$,
		since if we assume the contrary we deduce
		\begin{align}\label{trdevpm}
		\tr[(C\,C_p^{-1})^2]&-\frac{1}{3}[\tr(C\,C_p^{-1})]^2= \|C\,C_p^{-1}\|^2-\frac{1}{3}[\tr(C\,C_p^{-1})]^2\notag\\&
		\Leftrightarrow \langle C\,C_p^{-1},(C\,C_p^{-1})^T\rangle= \pm \langle C\,C_p^{-1},C\,C_p^{-1}\rangle.
		\end{align}
		On the other hand, we deduce
		\begin{align}\label{ShuIhl000}
		\tr[(C\,C_p^{-1})^2]&=\langle(C\,C_p^{-1})\,  (C\,C_p^{-1}),\id\rangle=\langle C_p^{-1}\,C\,C_p^{-1}\,  (C\,C_p^{-1})\,C_p,\id\rangle\notag\\&=\langle C_p^{-1}\,C\,C_p^{-1}\, C_p\, (C\,C_p^{-1})^T,\id\rangle=\langle C_p^{-1}\,C,  C\,C_p^{-1}\rangle.
		\end{align}
		Since from Remark \ref{remarkcpPsym} it follows that $C_p\in{\rm PSym}(3)$, we further deduce that
		\begin{align}\label{ShuIhl2200}
		\langle C_p^{-1}\,C\,C_p^{-1}\, C_p, C\,C_p^{-1}\rangle&=
		\langle U_p^{-1}U_p^{-1}\,C\,C_p^{-1}\, U_p\,U_p,  C\,C_p^{-1}\rangle\notag\\&
		=
		\langle U_p^{-1}\,C\,C_p^{-1}\, U_p, U_p^{-1} C\,C_p^{-1}\,U_p\rangle\notag\\&=
		\| U_p^{-1}\,C\,U_p^{-1}\|^2\geq 0,
		\end{align}
		where $U_p^2=C_p$. Hence, $[\tr(C\,C_p^{-1})^2]\geq 0$ and from \eqref{trdevpm} we deduce
		\begin{align}\label{trdevpm0}
		\langle C\,C_p^{-1},(C\,C_p^{-1})^T\rangle&=  \langle C\,C_p^{-1},C\,C_p^{-1}\rangle\quad
		\Leftrightarrow \quad
		\langle C\,C_p^{-1},{\rm skew}(C\,C_p^{-1})\rangle= 0\notag\\&\Leftrightarrow \quad
		{\rm skew}(C\,C_p^{-1})= 0\quad \Leftrightarrow \quad
		C\,C_p^{-1}\in{\rm Sym}(3).
		\end{align}
		In conclusion, $\tr[(C\,C_p^{-1})^2]-\frac{1}{3}[\tr(C\,C_p^{-1})]^2\neq \|\dev_3(C\,C_p^{-1})\|^2$ and the flow-rule does not have  a subdifferential structure of the form $C_p\,\frac{\rm d}{{\rm d t}}[C_p^{-1}]\in -\,\Partial \Chi(\dev \Sigma)$.

		\subsection*{A.2 The Helm 2001 model for the Neo-Hooke energy}

		For  the simplest Neo-Hooke elastic energy  $W(F_e)=\tr(C_e)=\widetilde{W}(C\, C_p^{-1})=\frac{1}{2}\,\tr(C\, C_p^{-1})$, we have
		\begin{align}\label{NHSHI}
		D_C[\widetilde{W}(C\, C_p^{-1})]=\frac{1}{2}\,C_p^{-1} \qquad \Rightarrow\qquad \widetilde{\Sigma}=C\, C_p^{-1}\not\in{\rm Sym}(3),
		\end{align}
		and the flow rule  \eqref{SHflow} implies
		\begin{align*}
		\frac{\rm d}{\rm dt}[C_p]&=\lambda_{\rm p}^+\,\frac{\dev_3 (C\, C_p^{-1})}{\sqrt{\tr[ (\dev_3 (C\, C_p^{-1}))^2]}}\cdot C_p\notag\\&=\frac{\lambda_{\rm p}^+}{\sqrt{\tr[ (\dev_3 (C\, C_p^{-1}))^2]}}\,[\,C -\frac{1}{3}\tr(C\, C_p^{-1})\cdot C_p]\in {\rm Sym}(3)\quad \Rightarrow \quad  C_p\in {\rm Sym}(3),\notag
		\end{align*}
		and also
		\begin{align}\label{SHdet1}
		\frac{\rm d}{\rm dt}[C_p]\, C_p^{-1}=\lambda_{\rm p}^+\,\frac{\dev_3 (C\, C_p^{-1})}{\sqrt{\tr[ (\dev_3 (C\, C_p^{-1}))^2]}}\quad \Rightarrow \quad \det C_p=1.
		\end{align}
		The thermodynamical consistency may follow  from \eqref{ShuIhl002}. An alternative proof, directly for the Neo-Hooke case, results from  \eqref{ShuIhl} and \eqref{NHSHI}, since we have at fixed in time $C$
		\begin{align}\label{dissSH0}
		&\frac{\rm d}{\rm dt} [\widetilde{W}(C\, C_p^{-1})]=
		-\frac{\lambda_{\rm p}^+}{4\,\sqrt{\tr[ (\dev_3 (C\, C_p^{-1}))^2]}}\,\langle C_p^{-1}\,\widetilde{\Sigma}\, C_p, \dev_3 \widetilde{\Sigma}\rangle\notag\\&=
		-\frac{\lambda_{\rm p}^+}{\sqrt{\tr[ (\dev_3 (C\, C_p^{-1}))^2]}}\,\langle C_p^{-1}\,C, \dev_3 (C\, C_p^{-1})\rangle\notag\\
		&=
		-\frac{\lambda_{\rm p}^+}{\sqrt{\tr[ (\dev_3 (C\, C_p^{-1}))^2]}}\,\langle(C\, C_p^{-1})^T, \dev_3 (C\, C_p^{-1})\rangle\\
		&=
		-\frac{\lambda_{\rm p}^+}{\sqrt{\tr[ (\dev_3 (C\, C_p^{-1}))^2]}}\,\langle C^{-1/2}\dev_3(C\, C_p^{-1})\,C^{1/2}\, C^{-1/2}\,\dev_3 (C\, C_p^{-1})C^{1/2},\id\rangle\notag
		\\
		&=
		-\frac{\lambda_{\rm p}^+}{\sqrt{\tr[ (\dev_3 (C\, C_p^{-1}))^2]}}\,\langle \dev_3(C^{1/2}\, C_p^{-1}\, C^{1/2})^T\, \dev_3( C^{1/2}\,C_p^{-1}\, C^{1/2}),\id\rangle\notag
		\\
		&=
		-\frac{\lambda_{\rm p}^+}{\sqrt{\tr[ (\dev_3 (C\, C_p^{-1}))^2]}}\,\| \dev_3(C^{1/2}\, C_p^{-1}\, C^{1/2})\|^2,\notag
		\end{align}
		which is negative\footnote{Surprisingly, this follows even if  $C$ and $C_p^{-1}$ do not commute in general. If $C$ and $C_p$ commute, then  $X=C\, C_p^{-1}\in{\rm Sym}(3)$ and the quantity  does  have a sign, since then $\langle X^T, \dev_3 X\rangle=\|\dev_3 X\|^2\geq 0.$ }. Therefore, this   model is   thermodynamically correct as now  shown also for the simple Neo-Hooke energy.
		
		\subsection*{A.3 Another referential model}
		
		We recall that, in view of Lemma \ref{lemaplasticn},
		any isotropic free energy  $W$ defined in terms of $F_e$ can be expressed as
		$
		W(F_e)=\widetilde{W}(C\,C_p^{-1})
		$. In order to assume that the reduced dissipation inequality is satisfied, we compute
		\begin{align*}
		\frac{\rm d}{\rm dt} \widetilde{W}(C\, C_p^{-1})&=\langle D\widetilde{W}(C\,C_p^{-1}), C\, \frac{\rm d}{\rm dt} [C_p^{-1}]\rangle\notag\\&=\langle {C\,D\widetilde{W}(C\,C_p^{-1})\, C_p^{-1}},  \frac{\rm d}{\rm dt} [C_p^{-1}]\, C_p\rangle=\langle {\widetilde{\Sigma}},  \frac{\rm d}{\rm dt} [C_p^{-1}]\, C_p\rangle.
		\end{align*}
		Here, $
		\widetilde{\Sigma}=2\,C\, D_C[\widetilde{W}(C\, C_p^{-1})],
		$
		as in the Reese 2008 and Shutov-Ihlemann 2014 model.
		It is tempting to assume the flow rule in the associated form (see e.g. the habilitation thesis of Miehe \cite[page 73, Satz 5.32]{Miehe92} or \cite{miehe1998constitutive} and also \cite[Table 1]{miehe1995theory})
		\begin{align}\label{miehetildesigma}
		\frac{\rm d}{\rm dt} [C_p^{-1}]\, C_p\in-\,\Partial_{\widetilde{\Sigma}}{\Chi}(\dev_3\widetilde{\Sigma}),
		\end{align}
		where
		$ {\Chi}(\dev_3\widetilde{\Sigma})$ is the indicator function of the convex elastic domain
		\begin{align*}
		\mathcal{E}_{\rm e}(\widetilde{\Sigma},{\frac{2}{3}}\, \sigma_{\textbf{y}}^2):=\left\{\widetilde{\Sigma}\in \R^{3\times 3}\, | \, \|\dev_3\widetilde{\Sigma}\|^2\leq {\frac{2}{3}}\, \sigma_{\textbf{y}}^2\right\}.
		\end{align*}
		Note that this   flow rule \eqref{miehetildesigma}  is not  the formulation which Miehe seemed to intend. We have discussed the correct  interpretation  in Subsection \ref{Miehe1995}.
		
		Regarding such a formulation we can summarize our observations:
		\begin{itemize}
			\item[i)] this flow rule is thermodynamically correct;
			\item[ii)] the right hand side is  a function of $C$ and $C_p^{-1}$ only, i.e. $\widetilde{\Sigma}=\widetilde{\Sigma}(C,C_p^{-1})$;
			\item[iii)] plastic incompressibility: from this flow rule it follows that $\det C_p(t)=1$, since the right hand side is trace-free;
			\item[iv)] however, the computed tensor $\boldsymbol{C_p(t)}$ {\bf will  not be symmetric} since   $\widetilde{\Sigma}\,C_p^{-1}\not\in {\rm Sym}(3)$ in general. For instance, for  the simplest Neo-Hooke energy $W(F_e)=\tr(C_e)=\tr(C\, C_p^{-1})$ we have  $\widetilde{\Sigma}=2\,C\, C_p^{-1}\not\in {\rm Sym}(3)$, $\widetilde{\Sigma}\,C_p^{-1}=2\,C\, C_p^{-2}\not\in {\rm Sym}(3)$, in general, and the flow rule becomes
			\begin{align}
			\frac{\rm d}{\rm dt} [C_p^{-1}]\, =-2\,\frac{\lambda_{\rm p}^+}{\|\dev  (C\,C_p^{-1})\|}\, [C\,C_p^{-2}-\frac{1}{3}\, \tr(C\,C_p^{-1})\cdot C_p^{-1}]\not\in {\rm Sym}(3);
			\end{align}
			\item[v)] it is an associated plasticity model in the sense of Definition \ref{definitionpld}.
		\end{itemize}
		In conclusion, this model is inconsistent with the requirement for a plastic metric, i.e. $C_p\in {\rm Psym}(3)$. Moreover, if we are looking to the flow rule   in the associated form considered in the habilitation thesis of Miehe \cite[page 73, Satz 5.32]{Miehe92} (see \cite{miehe1998constitutive} and also \cite[Table 1]{miehe1995theory}),  since the subdifferential $ \Partial_{\widetilde{\Sigma}}{\Chi}(\dev_3\widetilde{\Sigma})$ of the indicator function $\Chi$ is the normal cone
		\begin{align*}
		\mathcal{N}(\mathcal{E}_{\rm e}(\widetilde{\Sigma},\frac{1}{3}\,{\boldsymbol{\sigma}}_{\!\mathbf{y}}^2);\dev_3 \widetilde{\Sigma})=\left\{\begin{array}{ll}
		0, & \widetilde{\Sigma}\in {\rm int}(\mathcal{E}_{\rm e}(\widetilde{\Sigma},\frac{1}{3}{\boldsymbol{\sigma}}_{\!\mathbf{y}}^2))
		\vspace{2mm}\\
		\{\lambda^+_{\rm p}\, \frac{\dev_3\widetilde{\Sigma}}{\|\dev_3\widetilde{\Sigma}\|}\,|\, \lambda^+_{\rm p}\in \R_+\},& \widetilde{\Sigma}\not\in {\rm int}(\mathcal{E}_{\rm e}(\widetilde{\Sigma},\frac{1}{3}{\boldsymbol{\sigma}}_{\!\mathbf{y}}^2)).
		\end{array}\right.
		\end{align*}
		the
		flow rule can be written in the form
		\begin{align}\label{mieheflowshu}
		\frac{\rm d}{\rm dt} [C_p^{-1}]\, C_p =-\lambda^+_{\rm p}\, \frac{\dev_3\widetilde{\Sigma}}{\|\dev_3\widetilde{\Sigma}\|}\,,
		\end{align}
		which is not equivalent with the flow rule \eqref{shutovremark} considered by Miehe in \cite{Miehe95}, since $\widetilde{\Sigma}\not\in{\rm Sym}(3)$. Let us remark that we have the symmetries $\dev_3\widetilde{\Sigma}\cdot C_p\in {\rm Sym}(3)$, $C_p^{-1}\, \dev_3 \widetilde{\Sigma}\in {\rm Sym}(3)$, but these do not assure that the flow rule \eqref{mieheflowshu}  implies $C_p\in {\rm Sym}(3)$.
		
		\subsection*{A.4. The Simo-Hughes 1998-model for the Saint-Venant-Kirchhoff energy and  for the Neo-Hooke energy}
		
		In order to see that the quantity $F_e^{-1}
		\frac{\dev_n\tau_e}{\|\dev_n\tau_e\|}F_e^{-T}$ which appears in the Simo-Hughes flow rule is not necessarily a trace free matrix,  we consider two energies: the isotropic elastic Saint-Venant-Kirchhoff energy and the energy considered by Simo and Hughes \cite[page 307]{Simo98b}.  On the one hand, the well known  isotropic elastic Saint-Venant-Kirchhoff energy is
		\begin{align*}
		W_{_{\rm SVK}} &=\frac{\mu}{4}\,\|C_e-\id\|^2+\frac{\lambda}{8}\,[\tr(C_e-\id)]^2
		=\frac{\mu}{4}\,\|B_e-\id\|^2+\frac{\lambda}{8}\,[\tr(B_e-\id)]^2,
		\end{align*}
		and the corresponding Kirchhoff stress tensor is given by
		\begin{align*}
		\tau_e^{\rm SVK}(U)&=D_{B_e}[W^{{\rm SVK}}(B_e)]=\mu\,(F_e^{-T}\, C_e\, F_e^T-\id)+\frac{\lambda}{2}\tr(C_e-\id)\cdot \id\notag\\&=\mu\,(F_e\, F_e^T-\id)+\frac{\lambda}{2}\tr(F_e\, F_e^T-\id)\cdot \id.\notag
		\end{align*}
		Hence, we deduce
		\begin{align*}
		F_e^{-1}
		\,[\dev_n\tau_e^{\rm SVK}]\,F_e^{-T}&=\mu \,F_e^{-1}
		\,{\dev_n[\,F_e\, F_e^T]}\,F_e^{-T}=\mu \,F_e^{-1}
		\,[\,F_e\, F_e^T-\frac{1}{3}\tr(F_e\, F_e^T)\cdot \id]\,F_e^{-T}\\
		&=\mu
		\,[\ \id-\frac{1}{3}\tr(F_e\, F_e^T)\cdot F_e^{-1}\,F_e^{-T}]\notag\\&=\mu
		\,[\ \id-\frac{1}{3}\tr(F_e\, F_e^T)\cdot F_e^{-1}\,F_e^{-T}],\notag
		\end{align*}
		and further
		\begin{align*}
		\langle F_e^{-1}
		\,[\dev_n\tau_e^{\rm SVK}]\,F_e^{-T},\id\rangle&=\mu
		\,\langle\ \id-\frac{1}{3}\tr(F_e\, F_e^T)\cdot F_e^{-1}\,F_e^{-T}],\id\rangle\notag\\&=\mu\left[3-
		\,\frac{1}{3}\tr(F_e\, F_e^T)\,\tr( F_e^{-1}\,F_e^{-T})\right]\\\notag
		&=\mu\left[3-
		\,\frac{1}{3}\tr(C_e)\,\tr( C_e^{-1})\right]\notag\\&=\mu\left[3-
		\,\frac{1}{3\, \det C_e}\tr(C_e)\,\tr({\rm Cof} \,C_e)\right].
		\end{align*}
		We remark that $\langle F_e^{-1}
		\,[\dev_n\tau_e^{\rm SVK}]\,F_e^{-T},\id\rangle=0$ if and only if
		$
		\tr(C_e)\,\tr({\rm Cof} \,C_e)=9\, \det C_e,
		$ which does not hold true in general. Since $C_e $ and ${\rm Cof} C_e$ are coaxial and symmetric, the problem can be reduced to the diagonal case, i.e.  we may assume $C_e=\diag (\lambda_1,\lambda_2,\lambda_3)$, $\lambda_i>0$. Hence the condition $
		\tr(C_e)\,\tr({\rm Cof} \,C_e)=9\, \det C_e,
		$ becomes
		\begin{align*}
		9\,\lambda_1\lambda_2\lambda_3&=(\lambda_1+\lambda_2+\lambda_3)(\lambda_1\lambda_2+\lambda_2\lambda_3+\lambda_3\lambda_2) \notag\\&\Leftrightarrow \quad 0=\lambda_1(\lambda_2-\lambda_3)^2+\lambda_2(\lambda_3-\lambda_1)^2+\lambda_3(\lambda_1-\lambda_3)^2\notag
		\end{align*}
		which is satisfied if and only if $\lambda_1=\lambda_2=\lambda_3$. Therefore, for the Saint-Venant-Kirchhoff energy, in this model, $\det \overline{C}_p=1$ is only true for the conformal mapping $F_e=\lambda\cdot {\rm SO}(3)\in \R_+\cdot {\rm SO}(3)$.
		
		On the other hand, the energy considered by Simo and Hughes \cite[page 307]{Simo98b} is
		\begin{align*}
		W_{\rm Simo}(B_e)=\frac{\mu}{2}\langle \frac{B_e}{\det B_e^{1/3}}-\id,\id\rangle+\frac{\kappa}{4}\left[(\det B_e-1)-\log (\det B_e)\right],
		\end{align*}
		for which the Kirchhoff stress tensor is given by
		\begin{align*}
		\tau_e^{{\rm Simo}}=\mu \dev_3\left( \frac{B_e}{\det B_e^{1/3}}\right)+\frac{\kappa}{2}\, \left(J_e -\frac{1}{J_e}\right)\cdot \id.
		\end{align*}
		Hence, we deduce
		\begin{align*}
		\langle& F_e^{-1}\,[{\dev_n\tau_e^{{\rm Simo}}}]\,F_e^{-T},\id \rangle=\mu \, \frac{1}{\det B_e^{1/3}}\,\langle F_e^{-1}[\dev_3 {B_e}]\,F_e^{-T},\id\rangle\notag\\&=\mu \, \frac{1}{\det B_e^{1/3}}\,\langle \dev_3 {B_e},F_e^{-T}\,F_e^{-1}\rangle
		\\&=\mu \, \frac{1}{\det B_e^{1/3}}\,\langle \dev_3 {B_e},B_e^{-1}\rangle=\mu \, \frac{1}{\det B_e^{1/3}}\left[\,\langle {B_e},B_e^{-1}\rangle-\frac{1}{3}\tr( {B_e})\,\tr(B_e^{-1})\right]\notag\\
		&=\mu \, \frac{1}{\det B_e^{1/3}}\left[3-\frac{1}{3}\tr( {B_e})\,\tr(B_e^{-1})\right]\notag\\&
		=\mu \, \frac{1}{\det B_e^{4/3}}\left[3\,\det B_e-\frac{1}{3}\tr( {B_e})\,\tr({\rm Cof}\,B_e)\right].\notag
		\end{align*}
		Therefore $\langle F_e^{-1}
		\,[\dev_n\tau_e^{{\rm Simo}}]\,F_e^{-T},\id\rangle=0$ if and only if
		$
		9\,\det B_e=\tr( {B_e})\,\tr({\rm Cof}\,B_e)
		$. Similar as above, it follows that  this holds true  if and only if  $F_e=\lambda\cdot {\rm SO}(3)\in \R_+\cdot {\rm SO}(3)$.
	
\end{document}